\theoremstyle{definition}
\newtheorem{theorem}{Theorem}
\newtheorem*{conjecture*}{Conjecture}
\newtheorem{lemma}{Lemma}
\newtheorem{proposition}{Proposition}
 \newtheorem{code}{Code Family}
\newtheorem{conjcode}{(Conjectured) Code Family}
\newcommand\numberthis{\addtocounter{equation}{1}\tag{\theequation}}
    \newcommand{\bfit}[1]{\textit{#1}}
    \newcommand{\ZZ}{\mathbb{Z}}	
    \newcommand{\R}{\mathbb{R}} 
    \newcommand{\C}{\mathbb{C}} 
        \newcommand{\Q}{\mathbb{Q}} 
    \DeclareMathAlphabet{\mathsfit}{T1}{\sfdefault}{\mddefault}{\sldefault}
    \SetMathAlphabet{\mathsfit}{bold}{T1}{\sfdefault}{\bfdefault}{\sldefault}
    \newcommand{\X}{\mathsf{X}}
    \newcommand{\Y}{\mathsf{Y}}
    \newcommand{\Z}{\mathsf{Z}}
    \renewcommand{\S}{\mathsf{S}}
    \newcommand{\T}{\mathsf{T}}
    \newcommand{\Ph}{\mathsf{Ph}}
    \renewcommand{\P}{\mathsf{P}}
    \newcommand{\Glog}{\mathsf{G}}
    \newcommand{\bmsf}[1]{\bm{\mathsf{#1}}}
    \newcommand{\logirr}{\bmsf{\lambda}}
    \newcommand{\SL}{\mathrm{SL}}
    \newcommand{\SU}{\mathrm{SU}}
    \newcommand{\U}{\mathrm{U}}
    \newcommand{\SO}{\mathrm{SO}}
    \renewcommand{\Q}{\mathsf{Q}}
    \newcommand{\D}{\mathscr{D}}
    \newcommand{\BD}{\mathsf{BD}}
    \newcommand{\logzero}{\ket{\overline{0}}}
    \newcommand{\logone}{\ket{\overline{1}}}
    \newcommand{\logicalket}[1]{\ket*{\overline{#1}}}
    \newcommand{\logicalbra}[1]{\bra*{\overline{#1}}}
    \newcommand{\supp}{\text{supp}}
\newcommand{\smallminus}{\text{-}}
\begin{document}

\title{Permutation-Invariant Quantum Codes with Transversal Generalized Phase Gates}

\author{Eric~Kubischta, Ian~Teixeira

\thanks{Both authors contributed equally to this work and are both with the Department
of Mathematics and the Joint Center for Quantum Information and Computer Science, University of Maryland, College Park, Maryland 20742 USA}

}

\maketitle

\begin{abstract}
With respect to the transversal gate group (an invariant of quantum codes), we demonstrate that non-additive codes can outperform stabilizer codes. We do this by constructing spin codes that correspond to permutation-invariant multiqubit codes that can implement generalized phase gates transversally. Of particular note, we construct permutation-invariant quantum codes that implement a transversal $T$ gate using fewer qubits and with a better minimum distance than is possible with the best known stabilizer codes. 
\end{abstract}

\begin{IEEEkeywords}
Quantum codes, Permutation-invariant codes, Non-additive quantum codes, Quantum error correction, Transversal gates.
\end{IEEEkeywords}

\section{Introduction}

A quantum code is usually referred to using three parameters: the number of physical qubits $n$, the dimension of the codespace $K$, and the distance of the code $d$. Stabilizer codes are denoted $ [[n,k,d]] $, where $ K=2^k $ is the dimension of the codespace. More generally, non-stabilizer codes, usually called non-additive codes, are denoted $((n,K,d))$.

To our knowledge, the best non-additive codes have distance that is at most $ 1 $ greater than the best stabilizer codes, for the same $ n $ and $ K $. An example is the non-additive quantum Goethals-Preparata code \cite{QuantumGoethalsPreparataCodes} that encodes $ 217 $ logical qubits into $ 256 $ physical qubits with distance $ 8 $, whereas the best known stabilizer code encodes $ 217 $ logical qubits into $ 256 $ physical qubits with distance $ 7 $ \cite{Grassl:codetables}. This is also supported by the result from \cite{GF4codes,Shadows} that the linear programming bound on distance $d$ for non-additive codes is at most 1 greater than the linear programming bound on distance $d$ for stabilizer codes for all $ n \leq 30 $ and all $k \leq 23$. The limited nature of these improvements has, for the most part, relegated non-additive codes to the sidelines of quantum error correction. But these three parameters are not the only invariants of a quantum code. Although transversal gates may be conjugated by non-entangling gates, the group of logical gates $\Glog$ they form is a code invariant. This fact isn't completely unknown, but few techniques exist that can build codes with a specific transversal gate group $\Glog$. 

This brings us to the Dicke bootstrap (\cref{lem:DickeBootstrap} below). It was noticed in \cite{gross2} that the spin codes constructed in \cite{gross1} could be mapped to permutation-invariant multiqubit codes with similar error correction and transversality properties. The Dicke bootstrap transforms a spin code with logical gate group $\Glog$ and spin distance $d$ into a permutation-invariant multiqubit code with transversal gate group $\Glog$ and distance $d$. Thus, instead of studying multiqubit codes, we can study spin codes. Spin codes are easier to work with because they transform in a single irrep (irreducible representation) of $\SU(2)$ instead of a tensor product of irreps and thus spin codes live in an exponentially smaller Hilbert space than multiqubit codes - they populate only a tiny $ n+1 $ dimensional corner of the full $ 2^n $ dimensional space.

We construct new families of spin codes by studying the relationship between symmetry and the error correction conditions. It is illustrative to focus on the case when the symmetry is given by the order $2^{r+2}$ generalized quaternion group $\Q^{(r)}$, which is the maximal group in the $r$-th level of the single qubit Clifford hierarchy (Appendix B of \cite{anderson2022groups}). We also address a generalization of this group for non powers of two, called the binary dihedral group of even degree, denoted by $ \BD_{2b} $ where $ 2b $ is the degree. Each group $ \BD_{2b} $ is generated by the (determinant 1) Pauli gate $ \X $ and the generalized phase gate $ \sqrt[b]{\Z} $.

The resulting spin codes map to permutation-invariant multiqubit codes \cite{2004permutation,ouyangPI,aydin2023family} under the Dicke bootstrap. We find permutation-invariant multiqubit codes constructed this way that outperform the best known stabilizer codes, either in terms of using less physical qubits $n$, having larger distance $d$, or having more transversal gates. Of all the codes we find, perhaps the most striking is an $((11,2,3))$ code that implements the $ T $ gate transversally. There is a great deal of interest in codes with transversal $ T $  \cite{BravyitransversalT,andersontransversalT,HaahtransversalT,bombintransversalT,KubicatransversalT,smallestT,smallestT2,smallestT3} and the $ ((11,2,3)) $ code we find is likely the smallest possible code with transversal $ T $, belying past claims that the $[[15,1,3]]$ code, which was proven to be the smallest stabilizer code with transversal $ T $ \cite{smallestT,smallestT2,smallestT3}, is minimal. Beyond this, we find transversal $ T $ codes, numerically, with distance $ d=5,7,9,11,13 $ in $ n=27,49,73,107,147 $ qubits respectively. Our codes outperform the best known stabilizer codes with transversal $ T $ \cite{HighDistanceCodesTransversalCliffordTGates}, using fewer qubits to achieve a better distance.

\section{Background}

The Hilbert space of a single qubit is $ \C^2 $, while the Hilbert space of $ n $ qubits is $ (\C^2)^{\otimes n} \cong \C^{2^n} $. So a single qubit \textit{gate} is an element of the unitary group $ \U(2) $, while an $ n $-qubit gate is an element of the unitary group $ \U(2^n) $. An $n$-qubit gate is called \bfit{local} if it can be written as a tensor product of $ n $ single qubit gates. The \bfit{weight} of a local gate
$ \bigotimes_{i=1}^n U_i $ is the number of $ U_i $ that are not (multiples of) the identity matrix. Local gates do not create entanglement, since they act on each qubit separately. Another operation that doesn't create entanglement is swapping qubits (i.e. permuting the factors of $ \C^2 $ in the $ n $-fold tensor product). Thus an $n$-qubit gate is called \bfit{non-entangling} if it can be written as a product of local gates and qubit permutations.

A multiqubit quantum error-correcting code is a $ K $ dimensional subspace, called the \textit{codespace}, of the $ n $-qubit Hilbert space $ (\C^2)^{\otimes n} $. A code is said to have \textit{distance} $ d $ if the action on the codespace of any local gate $ E $ of weight $ \text{wt}(E)<d $ can be detected. This is formalized by the Knill-Laflamme error correction conditions \cite{KL}; a code has distance $ d $ if for all vectors $\logicalket{u}$ and $\logicalket{v}$ in the codespace we have 
\[
\logicalbra{u} E \logicalket{v} = c_E \braket{\overline{u} }{\overline{v}} \qquad \text{for } \text{wt}(E) < d. \numberthis \label{eqn:KL}
\]
The constant $ c_E $ is allowed to depend on $ E $ but not on $\logicalket{u}$ and $\logicalket{v}$. An encoding into $ n $ qubits of a $ K $ dimensional subspace with distance $ d $ is called an $ ((n,K,d)) $ quantum error-correcting code. Since these equations are linear in $ E $ it is equivalent to only check on a basis of Pauli errors of weight less than $ d $ (see Theorem 2 of \cite{gottesman2009introductionquantumerrorcorrection}).

All unitary gates preserve vector space dimension and thus preserve the parameters $n$ and $K$. However only non-entangling gates preserve the weight of a local gate. Thus only non-entangling gates preserve distance $d$. For this reason, two codes are said to be \bfit{equivalent} if they are related by a non-entangling gate \cite{rainsdistance2}.

A local gate that preserves the codespace is called a \bfit{transversal} gate.  Let $ g \in \U(2) $ be a logical gate for an $((n,2,d))$ code. We say that $ g $ is \bfit{exactly transversal} if the physical gate $ g^{\otimes n} $ implements logical $ g $ on the codespace. We say $ g $ is $h$-\bfit{strongly transversal} if there exists some $ h \in \U(2) $, not necessarily equal to $ g $, such that the physical gate $ h^{\otimes n} $  implements logical $ g $ on the codespace. All the codes in this paper have the Pauli gates $ X $ and $ Z $ exactly transversal. And many of the codes, including the $ ((11,2,3)) $ code, have the logical $ T $ gate $ T^3 $-strongly transversal.

\subsection{The Transversal Gate Group}

The group of all logical operations that can be implemented on the codespace using transversal gates is called the \bfit{transversal gate group} $ \Glog $ of the code. In addition to $ n, K, d $, the transversal gate group $\Glog$ is also a well defined code parameter under code equivalence.

\begin{lemma} Equivalent codes have isomorphic transversal gate groups.
\end{lemma}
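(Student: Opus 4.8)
The plan is to exhibit, for any pair of equivalent codes, an explicit isomorphism between their groups of transversally implemented logical operations induced by the equivalence gate. Let $\mathcal{C}$ and $\mathcal{C}'=V\mathcal{C}$ be equivalent codes, so $V$ is non-entangling; pushing all permutations to one side, write $V = P\, w$ with $P$ a permutation of the $n$ qubits and $w=w_1\otimes\cdots\otimes w_n$ a product of single-qubit gates. Fix an encoding isometry $E\colon \C^{K}\to\mathcal{C}$ (so $E^\dagger E = \mathbbm{1}_K$ and $\mathrm{range}\,E=\mathcal{C}$); then $E' := VE$ is an encoding isometry for $\mathcal{C}'$, since $V$ is unitary and $\mathrm{range}\,E' = V\mathcal{C}=\mathcal{C}'$. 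For a transversal (i.e.\ local) gate $U$ preserving $\mathcal{C}$, write its logical action as $\mathrm{Log}(U):=E^\dagger U E\in\U(K)$; this is unitary because $U$ preserves $\mathcal{C}$, and the transversal group $\G$ of $\mathcal{C}$ is the image under $\mathrm{Log}$ of the set of all such $U$ (a genuine subgroup of $\U(K)$, well defined up to the conjugation $\mathrm{Log}\mapsto W^\dagger(\cdot)W$ coming from a different choice of $E$, hence well defined as an abstract group).

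First I would check that the conjugation $U\mapsto VUV^\dagger$ sends transversal gates of $\mathcal{C}$ to transversal gates of $\mathcal{C}'$. Codespace preservation is immediate: $U\mathcal{C}\subseteq\mathcal{C}$ forces $(VUV^\dagger)\mathcal{C}' = VU\mathcal{C}\subseteq V\mathcal{C}=\mathcal{C}'$, and applying the same with $V\leftrightarrow V^\dagger$ (which is also non-entangling) gives equality. Locality is preserved because, for $U=\bigotimes_i u_i$, one has $VUV^\dagger = P\big(\bigotimes_i w_i u_i w_i^\dagger\big)P^\dagger$, which is again a single-qubit product up to the relabeling by $P$. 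Since $V^\dagger$ is non-entangling too, $U\mapsto VUV^\dagger$ is a bijection between the transversal gates of $\mathcal{C}$ and those of $\mathcal{C}'$, and it is visibly a group homomorphism.

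The key step is that this conjugation intertwines the two logical-action maps. Using $E'=VE$ and $V^\dagger V=\mathbbm{1}$,
\[
\mathrm{Log}'(VUV^\dagger) \;=\; (E')^\dagger (VUV^\dagger) E' \;=\; E^\dagger V^\dagger V\, U\, V^\dagger V E \;=\; E^\dagger U E \;=\; \mathrm{Log}(U).
\]
Hence the image of $\mathrm{Log}'$ over the transversal gates of $\mathcal{C}'$ equals the image of $\mathrm{Log}$ over the transversal gates of $\mathcal{C}$, i.e.\ $\G'=\G$ as subgroups of $\U(K)$ for the compatible isometries $E,\,E'=VE$, and therefore as abstract groups for any choices.

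The only real subtlety — and the reason the statement is not a tautology — is the point already flagged in the introduction: the physical transversal gates genuinely change under equivalence ($U$ versus $VUV^\dagger$), so one must verify that this change is exactly cancelled by the change of encoding isometry $E\mapsto VE$; the displayed identity is precisely that cancellation. A minor companion point is that ``transversal'' must be read as ``local'' (depth-one) rather than literally $u^{\otimes n}$: conjugating $u^{\otimes n}$ by a non-permutation local gate need not return a gate of the form $(\cdot)^{\otimes n}$, so the class that is actually stable under code equivalence is the class of local gates, which is what makes the bijection in the second step work.
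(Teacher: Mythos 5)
Your proof is correct and follows essentially the same route as the paper's: conjugate the transversal implementation by the non-entangling equivalence gate, note that this stays local, and observe that the change of encoding isometry ($E\mapsto VE$) exactly cancels the conjugation so the logical action is unchanged. Your write-up just makes explicit a few points the paper leaves implicit (the bijectivity via $V^\dagger$ being non-entangling, and the well-definedness of $\G$ up to choice of encoding), which is fine but not a different argument.
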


\begin{proof} Indeed, we prove something even stronger, constructing compatible bases for the two codes, the columns of the encoding maps $ V $ and $ V' $, with respect to which the two transversal gate groups are realized as the exact same matrix group. This result is of course stronger and more constructive than merely proving that the transversal gate groups are isomorphic.  Let $V$ be the encoding map, a $2^n \times K$ matrix whose $ K $ columns correspond to an orthonormal basis of logical states for the codespace. Let $U_L$ be a logical gate (a $K \times K$ matrix) that is implemented transversally on the code by $U_1 \otimes \cdots \otimes U_n$ (a $2^n \times 2^n$ matrix). As an equation this means
\[
   \qty( \bigotimes_i U_i) V = V U_L. \numberthis
\]
The encoding map for an equivalent code is $V' = W V$ where $W$ is some non-entangling gate (a $2^n \times 2^n$ matrix). Then $V = W^\dagger V'$ and we have
\begin{equation}
   \qty( \bigotimes_i  U_i')  V' =  V' U_L. \label{eqn:samegates}
\end{equation}
Here we have defined $\bigotimes_i  U_i' := W\qty( \bigotimes_i  U_i) W^\dagger$ which is indeed still a local gate since we are conjugating by the non-entangling gate $ W $. 

\cref{eqn:samegates} says that an equivalent code still has the same logical gates, they are just implemented by different transversal gates. The claim follows.
\end{proof}

\subsection{Spin Codes}

The unique $ 2j+1 $ dimensional irrep of $ \SU(2) $ is known in physics as a spin $ j $ system (where $j$ is either a non-negative integer or positive half-integer). A basis for the Hilbert space of a spin $ j $ system is given by the kets $\ket{j,m}$, for $m \in \{ -j, 1-j, \dots, j-1, j \}$. The kets $\ket{j,m}$ are an eigenbasis for the action of the diagonal subgroup of $ \SU(2) $ on a spin $ j $ system (in physics, the $\ket{j,m}$ are commonly referred to as the eigenkets for $J_z$, the $z$-component of angular momentum). For an introduction to the ``bra" and ``ket" notation for vectors, as well as an introduction to spin, see Chapter 1 of \cite{sakurai}.

A \bfit{spin code} \cite{gross1} is simply a $K$ dimensional subspace of a spin $ j $ system. The performance of the spin code is measured by how well the subspace protects against products of angular momentum errors $ J_x, J_y, J_z $, which are a basis for the $ 2j+1 $ dimensional irrep of the Lie algebra $ \mathfrak{su}(2)$. Thus we say that a spin code has \bfit{spin distance} $d$ if for all $\logicalket{u}$ and $\logicalket{v}$ in the codespace we have
\[
\logicalbra{u} J_{\alpha_1} \cdots J_{\alpha_p} \logicalket{v} = c \braket{\overline{u} }{\overline{v}} \qquad \text{for } 0 \leq p < d. \numberthis \label{eqn:KLspin}
\]
The constant $c$ is allowed to depend on $\alpha_1, \cdots, \alpha_p$ (here each $ \alpha_i $ takes on the the values $ {x,y,z} $) but not on $\logicalket{u}$ and $\logicalket{v}$. 

The $p$-fold products $J_{\alpha_1} \cdots J_{\alpha_p}$ are Cartesian tensors, and they span a reducible representation of $\SU(2)$, in particular a quotient of the $ p $th tensor power of the adjoint representation of $ \SU(2)$. Naturally we split these representations into irreducible pieces, the \textit{spherical tensors} $T_q^k$ which can be written using the Wigner-Eckart theorem as 
\[
T_q^k = \sqrt{\tfrac{2k+1}{2j+1}} \sum_{m=-j}^j C^{j\, m+q}_{k\,q, j\,m} \dyad{j,m+q}{j,m}   , \numberthis
\]
where $0 \leq |q| \leq k  \leq 2j$, and the $C^{j\,m+q }_{k\,q, j\,m} $ are Clebsch-Gordan coefficients. There are $ (2j+1)^2 $ spherical tensors $T_q^k$ and they are an orthonormal basis  for all linear operators on a spin $j$ Hilbert space, with respect to the trace inner product. More information on spherical tensors can be found in section 3.11 of \cite{sakurai} (and an older reference with different conventions is \cite{quantumtheoryangularmomentum}). 

For any fixed $ k $, the spherical tensors $T_q^k$ with $0 \leq |q| \leq k  $ transform in the irreducible $ 2k+1 $ dimensional irrep of $ \SU(2) $. The $ T_q^k $ for $ 0 \leq |q| \leq k < d $ and the $ J_{\alpha_1} \cdots J_{\alpha_p} $ for $ 0 \leq p < d $ span the same $ d^2 $ dimensional subspace of the operators on a spin $ j $ Hilbert space. The $ T_q^k $ for $ 0 \leq |q| \leq k < d $ are linearly independent and thus a basis for this subspace. However, there are $ \frac{3^d-1}{2} $ of the products $ J_{\alpha_1} \cdots J_{\alpha_p} $ for $ 0 \leq p < d $. Since the inequality $ d^2 \leq \frac{3^d-1}{2} $ is strict for $ d \geq 3 $, these products are in general not linearly independent. So to verify that a spin code has spin distance $ d $, as in \cref{eqn:KLspin}, it is equivalent, and requires far fewer conditions, to check that for all $\logicalket{u}$ and $\logicalket{v}$ in the codespace
\[
    \logicalbra{u} T_q^k \logicalket{v} = c \braket{\overline{u}}{\overline{v}} \quad \text{for } 0 \leq |q| \leq k < d,  \numberthis \label{eqn:KLsphericaltensor}
\]
where the constant $c$ is allowed to depend on $k$ and $q$ but not on $\logicalket{u}$ and $\logicalket{v}$.

\subsection{Dicke Bootstrap}

A spin $j$ system is isomorphic as an $ \SU(2)$ representation to the permutation-invariant subspace of the tensor product of $n = 2j$ spin $1/2$ systems \cite{sakurai}. An explicit isomorphism is the Dicke state mapping
\[
    \ket{j,m} \overset{\D}{\longmapsto} \ket*{D_{j-m}^{2j}}. \numberthis \label{eqn:dicke}
\]
Here $\ket*{D_w^n}$ is a Dicke state \cite{dicke1,dicke2,dicke3,dicke4,dicke5} defined as the (normalized) uniform superposition
    \[
        \ket*{D^n_{w}} = \tfrac{1}{\sqrt{\binom{n}{w}} }\sum_{ wt(s) = w } \ket{s}. \numberthis
    \]
    where the sum is over all $\binom{n}{w}$ of the length $n$ bit strings of Hamming weight $w$. For example,
\[
    \ket*{D_2^3} = \tfrac{1}{\sqrt{3}}\qty( \ket{011} + \ket{101} + \ket{110} ). \numberthis
\]

 Each $g \in \SU(2)$ has a natural action on spin $j$ via the Wigner $ D $ rotation operators $ D^j(g) $. If $D^j(g)$ preserves the codespace then it will implement a logical gate. A spin code is called $ (\Glog,\logirr) $ \bfit{covariant} if $D^j(g)$ implements the logical gate $\logirr(g)$ where $\logirr$ is a representation of $\Glog$ \cite{covariant1,covariant2}. 

\begin{lemma}[Dicke Bootstrap]\label{lem:DickeBootstrap} A $ \Glog $-covariant spin $j$ code with a codespace of dimension $ K $ and a spin distance of $ d $ corresponds under $\D$ to a $\Glog$-transversal $((n,K,d))$ permutation-invariant multiqubit code, where $n = 2j$. 
\end{lemma}

\begin{proof}
 
First we show that a $\Glog$-covariant spin $j$ code corresponds to a $\Glog$-transversal $n$-qubit code. In other words, we need to show that the Dicke state mapping $\D$ behaves as an intertwiner between the natural action of $ \SU(2) $ on a spin $ j $ irrep and the natural action of $ \SU(2) $ on an $ n=2j $ qubit system via the tensor product \cite{us1}:
\[
\D\qty[ D^j(g) \ket{j,m} ] = g^{\otimes n}  \D \ket{j,m} .  \numberthis \label{eqn:dickecovar}
\]
It is actually easier to prove this intertwining equation for $g \in \SL(2,\mathbb{C})$ which implies that it is also true for $g \in \SU(2)$ because $\SU(2)$ is a subgroup of $\SL(2,\mathbb{C})$. Moreover, if we can show this equation is true at the Lie algebra level, then it will automatically be true at the Lie group level via exponentiation. 

The group action on the left hand side of \cref{eqn:dickecovar} can be generated by exponentiating the span of the basis for the $2j+1$ dimensional irrep of $ \mathfrak{sl}(2,\mathbb{C}) $, the Lie algebra of $\SL(2,\mathbb{C})$, where the basis is given by $ J_+, J_-, J_z $ where $J_z$ is the $z$-component of angular momentum and $J_\pm = J_x \pm i J_y$ are ladder operators. The group action on the right hand side of \cref{eqn:dickecovar} can be generated in a corresponding way by exponentiating the span of the collective spin operators $ \sum_{i=1}^n \sigma_\alpha^{(i)} $ where $\sigma_\alpha^{(i)}$ is $\sigma_\alpha$ on the $i$-th qubit, and $ \sigma_\alpha$ is either $\sigma_\pm= \sigma_x \pm i \sigma_y= \tfrac{1}{2}(X \pm i Y) $, or $\sigma_z=\tfrac{1}{2} Z $. Notice that when $ j=\frac{1}{2} $ then $ \D $ is the identity operator and $ J_{\alpha}=\sigma_\alpha $.

Thus in order to show \cref{eqn:dickecovar}, it suffices to show
\[
    \D\qty[ J_\alpha \ket{j,m} ] = \sum_{i=1}^n \sigma_\alpha^{(i)} \D \ket{j,m}. \numberthis \label{eqn:dickebasis}
\]

Let's start with the $\alpha = z$ condition. On the left side of \cref{eqn:dickebasis} we have
\begin{align}
    \D \qty[ J_z \ket{j,m} ] &= m \D \ket{j,m} =m \ket*{D^{2j}_{j-m}}. \label{eqn:dickeZproof}
\end{align}
Here we have used the fact that $J_z \ket{j,m} = m \ket{j,m}$ since $\ket{j,m}$ is by definition an eigenvector of $J_z$ with eigenvalue $m$. Then we have used the Dicke mapping \cref{eqn:dicke}.

On the right hand side of \cref{eqn:dickebasis} we have 
\begin{subequations}
\begin{align}
    \sum_{i=1}^n \sigma_z^{(i)} \D \ket{j,m} &= \tfrac{1}{2}\sum_{i=1}^n Z^{(i)} \ket*{D^{2j}_{j-m}} \\
    &=  \tfrac{1}{2}\sum_{i=1}^n Z^{(i)} \tfrac{1}{\sqrt{\binom{2j}{j-m}} }\sum_{ wt(s) = j-m }  \ket{s} \label{eqn:dickeproof1} \\
    &= \tfrac{1}{\sqrt{\binom{2j}{j-m}} }\sum_{ wt(s) = j-m }  \tfrac{1}{2}\sum_{i=1}^n Z^{(i)} \ket{s} \\
     &= \tfrac{1}{\sqrt{\binom{2j}{j-m}} }\sum_{ wt(s) = j-m }  \tfrac{(j+m)-(j-m)}{2} \ket{s} \label{eqn:dickeproof2} \\
      &= \tfrac{1}{\sqrt{\binom{2j}{j-m}} }\sum_{ wt(s) = j-m }  m \ket{s} \\
    &= m \ket*{D^{2j}_{j-m}}.
\end{align}
\end{subequations}
In \cref{eqn:dickeproof1} we use the definition of the Dicke state. In \cref{eqn:dickeproof2} we use the fact that, since $Z\ket{0} = \ket{0}$, there are  $2j-(j-m)=j+m$ many values of $ i $ such that $ Z^{(i)} \ket{s} =\ket{s}$ and, since $Z\ket{1} = - \ket{1}$, there are $(j-m)$ values of $ i $ such that $ Z^{(i)} \ket{s} =-\ket{s}$. This calculation, together with \cref{eqn:dickeZproof}, proves \cref{eqn:dickebasis} for $J_z$. 

Now let's move on to $\alpha = \pm$. To start, recall \cite{sakurai}
\[
J_\pm \ket{j,m} = \sqrt{(j \mp m)(j\pm m+1)} \ket{j,m\pm 1}. \numberthis \label{eqn:ladders}
\]
Then the left hand side of \cref{eqn:dickebasis} is
\begin{subequations}
\begin{align}
    \D\qty[ J_\pm \ket{j,m}] &= \sqrt{(j \mp m)(j\pm m+1)} \D \ket{j,m\pm 1} \\
    &= \sqrt{(j \mp m)(j\pm m+1)} \ket*{D^{2j}_{j-m \mp 1}}. \label{eqn:dickeproof3}
\end{align}
\end{subequations}

Now let's consider the right hand side of \cref{eqn:dickebasis}. Recall that
$ \sigma_+ = \smqty(0 & 1 \\ 0 & 0)$ and $  \sigma_- = \smqty(0 & 0 \\ 1 & 0) $.
Using the standard convention $\ket{0} = \ket*{\tfrac{1}{2},\tfrac{1}{2}}$ and $\ket{1} = \ket*{\tfrac{1}{2},-\tfrac{1}{2}}$, we have $ \sigma_+ \ket{0} = 0$, $\sigma_+ \ket{1} = \ket{0}$, $\sigma_- \ket{0} = \ket{1}$, and $\sigma_- \ket{1} = 0$. To compute the right hand side of \cref{eqn:dickebasis} we need the following lemma.
\begin{lemma} \label{lem:rightside}
\begin{subequations}
\begin{align}
 \sum_i \sigma^{(i)}_+  \ket*{D^n_w} &= \sqrt{(n-w+1)w} \ket*{D^n_{w-1}}, \\
    \sum_i \sigma^{(i)}_-  \ket*{D^n_w} &= \sqrt{(w+1)(n-w)} \ket*{D^n_{w+1}}.
\end{align}
\end{subequations}
\end{lemma}
\begin{proof}
    Recall that a Dicke state can be expanded as
    \[
        \ket*{D^n_{w}} = \tfrac{1}{\sqrt{\binom{n}{w}} }\sum_{ wt(s) = w } \ket{s}, \numberthis
    \]
    where the sum is over all length $n$ bit strings of Hamming weight $w$.
    Then
    \begin{subequations}
    \begin{align}
        \sum_{i=1}^n \sigma^{(i)}_+  \ket*{D^n_w} &=  \tfrac{1}{\sqrt{\binom{n}{w}} } \sum_{i=1}^n \sum_{ wt(s)=w}  \sigma^{(i)}_+ \ket{s} \\
        &= \tfrac{1}{\sqrt{\binom{n}{w}} } \sum_{i=1}^n \sum_{\substack{wt(s')=w-1 \\ s'_i = 0}  }  \ket{s'} \label{eqn:lem3p1}  \\
        &=  \tfrac{(n-w+1)}{\sqrt{\binom{n}{w}} } \sum_{wt(s')=w-1 }  \ket*{s'} \label{eqn:lem3p2} \\
        &= \sqrt{(n-w+1)w} \ket*{D^n_{w-1}} \label{eqn:lem3p3}.
    \end{align}
    \end{subequations}
    \Cref{eqn:lem3p1} follows from the fact that $ \sigma_+^{(i)} $ annihilates any $ \ket{s} $ with $ s_i=0 $, while if $ s_i=1 $, then $ \sigma_+^{(i)} $ just changes $ s_i $ to a $ 0 $.
     In \cref{eqn:lem3p2}, note that the sum $ \sum_{i=1}^n \sum_{wt(s')=w-1, s'_i = 0  } $ is over exactly $ n \binom{n-1}{w-1} $ terms, and every weight $ w-1 $ bit string  $ s' $ appears in the sum the same number of times. Since there are exactly $ \binom{n}{w-1} $ many weight $ w-1 $ bit strings, and $ n \binom{n-1}{w-1}=(n-w+1) \binom{n}{w-1} $, that accounts for the factor of $(n-w+1)$ in \cref{eqn:lem3p2}. The final line, \cref{eqn:lem3p3}, uses the identity $ \binom{n}{w-1}/\binom{n}{w}=\tfrac{w}{n-w+1} $.

       Similarly,  
       \begin{subequations}
    \begin{align}
        \sum_{i=1}^n \sigma^{(i)}_-  \ket*{D^n_w} &=  \tfrac{1}{\sqrt{\binom{n}{w}} } \sum_{i=1}^n \sum_{ wt(s)=w}  \sigma^{(i)}_- \ket{s} \\
        &= \tfrac{1}{\sqrt{\binom{n}{w}} } \sum_{i=1}^n \sum_{\substack{wt(s')=w+1 \\ s'_i = 1}  }  \ket{s'}  \label{eqn:lem3p4}\\
        &= \tfrac{(w+1)}{\sqrt{\binom{n}{w}} } \sum_{wt(s')=w+1 }  \ket*{s'} \label{eqn:lem3p5} \\
        &= \sqrt{(w+1)(n-w)}\ket*{D^n_{w+1}} \label{eqn:lem3p6}.
    \end{align}
    \end{subequations}
    \Cref{eqn:lem3p4} follows from the fact that $ \sigma_-^{(i)} $ annihilates any $ \ket{s} $ with $ s_i=1 $, while if $ s_i=0 $, then $ \sigma_-^{(i)} $ just changes $ s_i $ to a $ 1 $.
     For \cref{eqn:lem3p5}, note that the sum $ \sum_{i=1}^n \sum_{wt(s')=w+1, s'_i = 1  } $ is over exactly $ n \binom{n-1}{w} $ terms, and every weight $ w+1 $ bit string  $ s' $ appears in the sum the same number of times. Since there are exactly $ \binom{n}{w+1} $ many weight $ w+1 $ bit strings, and $ n \binom{n-1}{w}=(w+1) \binom{n}{w+1} $, that accounts for the factor of $ w+1 $ in \cref{eqn:lem3p5}. The final line, \cref{eqn:lem3p6}, uses the identity $ \binom{n}{w+1}/\binom{n}{w}=\frac{n-w}{w+1} $.

\end{proof}
Plugging in $n = 2j$ and $w = j-m$ to \cref{lem:rightside} then we can compute the right hand side of \cref{eqn:dickebasis} and see it agrees with \cref{eqn:dickeproof3} as desired. This concludes the proof that logical gates are preserved under $\D$.

\ \\
We now prove that a spin code with (spin) distance $d$ maps under the Dicke mapping $ \D $ to a multiqubit code with distance $d$. Let $\ket{u} $ be a spin state and let $\ket{\widetilde{u}} := \D \ket{u}$ be the corresponding permutationally invariant multiqubit state. Then we can rewrite \cref{eqn:dickebasis}  suggestively as
\[
    \D\qty[ J_\alpha \ket{u} ] =  \widetilde{J_\alpha} \ket{\widetilde{u}}, \numberthis
\]
where $ \widetilde{J_{\alpha}}:= \D J_{\alpha} \D^\dagger=  \sum_{i=1}^n \sigma_\alpha^{(i)} $  are the collective spin operators. Assume we have a spin $j$ code with (spin) distance $d$ and recall that the spin distance conditions for spherical tensors given in \cref{eqn:KLsphericaltensor}
are equivalent to the spin distance conditions for products of angular momentum given in \cref{eqn:KLspin}. 
Then for $0 \leq p < d$ we have
\begin{subequations}
\begin{align}
\bra{\widetilde{u}}\widetilde{J_{\alpha_1}} \cdots \widetilde{J_{\alpha_p}} \ket{\widetilde{v}} &= \bra{u} \D^{\dagger} \D J_{\alpha_1} \D^{\dagger} \D  \cdots \D^{\dagger} \D J_{\alpha_p} \D^{\dagger} \D\ket{v} \\
&= \bra{u}  J_{\alpha_1} \cdots  J_{\alpha_p} \ket{v} \\
&= c \braket{u }{v} \\
&= c \bra{u} \D^\dagger \D \ket{v} \\ 
&= c \braket{\widetilde{u}}{\widetilde{v}}.
\end{align} \label{eqn:insertdicke}
\end{subequations}

Now we need to show that
$
\bra{\widetilde{u}}\widetilde{J_{\alpha_1}} \cdots \widetilde{J_{\alpha_p}} \ket{\widetilde{v}}=c \braket{\widetilde{u}}{\widetilde{v}} $ for $ 0 \leq p < d $ implies $ \bra{\widetilde{u}}E \ket{\widetilde{v}}=c_E \braket{\widetilde{u}}{\widetilde{v}} $ for all Pauli errors of weight $ wt(E)<d $, since the latter condition is equivalent to showing that the multiqubit code has distance $ d $ (see Theorem 2 of \cite{gottesman2009introductionquantumerrorcorrection}). 

We proceed by induction. The base case follows from the permutation invariance of the multiqubit states $ \ket{\widetilde{u}},\ket{\widetilde{v}}$ (see \cite{gross2}), we have
\begin{subequations}
\begin{align}
    c \braket{\widetilde{u}}{\widetilde{v}}  &=\bra{\widetilde{u}}  \label{eqn:TildeSpinDistance}\widetilde{J_{\alpha}} \ket{\widetilde{v}} \\
    &=\bra{\widetilde{u}}  \sum_{i=1}^n \sigma_{\alpha}^{(i)} \ket{\widetilde{v}} \\
     &= \sum_{i=1}^n \bra{\widetilde{u}}  \sigma_{\alpha}^{(i)} \ket{\widetilde{v}} \\
     &= \sum_{i=1}^n \bra{\widetilde{u}}  \sigma_{\alpha}^{(i^*)} \ket{\widetilde{v}} \label{eqn:GrossPermutationInvarianceTrickSmall} \\
     &= n \bra{\widetilde{u}}  \sigma_{\alpha}^{(i^*)} \ket{\widetilde{v}}.
\end{align}
\end{subequations}
The first line follows from the spin distance condition for the original spin code, and is just a restatement of \cref{eqn:insertdicke} for the $ p=1 $ case. In \cref{eqn:GrossPermutationInvarianceTrickSmall} we use the permutation invariance of the codewords to permute each index $ i $ to the particular fixed index $ i^* $, using equation (25) from \cite{gross2}. This proves the base case of $p=1$.

Now for the inductive step. Suppose that $ \bra{\widetilde{u}}E \ket{\widetilde{v}}=c_E \braket{\widetilde{u}}{\widetilde{v}} $ for all Pauli errors $ E $ of weight $ wt(E)<p $. Then we need to show that $ \bra{\widetilde{u}}E \ket{\widetilde{v}}=c_E \braket{\widetilde{u}}{\widetilde{v}} $ for all Pauli errors of weight $ wt(E)=p $. The key is to separate the terms with Pauli errors of weight $ p $ and the terms with Pauli errors of weight less than $ p $. We have
\begin{subequations}
\begin{align}
   c \braket{\widetilde{u}}{\widetilde{v}}  &=\bra{\widetilde{u}}  \widetilde{J_{\alpha_1}} \cdots \widetilde{J_{\alpha_p}} \ket{\widetilde{v}} \\
    &=  \bra{\widetilde{u}}  \Bigg(\sum_{i_1=1}^n \sigma_{\alpha_1}^{(i_1)} \Bigg) \cdots \Bigg( \sum_{i_p=1}^n \sigma_{\alpha_p}^{(i_p)} \Bigg) \ket{\widetilde{v}} \\
    &= \sum_{1 \leq i_1, \dots, i_p \leq n} \bra{\widetilde{u}}  \sigma_{\alpha_1}^{(i_1)}  \cdots \sigma_{\alpha_p}^{(i_p)} \ket{\widetilde{v}} \\
    &=  \sum_{ \text{some } wt(E)<p}   \bra{\widetilde{u}} E \ket{\widetilde{v}} \nonumber \\
    &\qquad + \sum_{1 \leq i_1 \neq \dots \neq i_p \leq n} \bra{\widetilde{u}}  \sigma_{\alpha_1}^{(i_1)}  \cdots \sigma_{\alpha_p}^{(i_p)} \ket{\widetilde{v}}  \\
    &=  \sum_{ \text{some } wt(E)<p}  c_E \braket{\widetilde{u}}{\widetilde{v}} \nonumber \\
    &\qquad + \sum_{1 \leq i_1 \neq \dots \neq i_p \leq n} \bra{\widetilde{u}}  \sigma_{\alpha_1}^{(i_1^*)}  \cdots \sigma_{\alpha_p}^{(i_p^*)} \ket{\widetilde{v}} .\label{eqn:GrossPermutationInvarianceTrick}   
\end{align}
\end{subequations}

This holds true for all choices of distinct qubits $ i_1^*, \dots i_p^* $ with $ 0 \leq p < d $. In \cref{eqn:GrossPermutationInvarianceTrick} we use induction on $ p $ to simplify the terms in the first sum and we use the permutation invariance of the codewords to permute each list of distinct qubits $ i_1, \dots i_p $ to the particular list of distinct qubits $ i_1^*, \dots i_p^* $, using exactly equation (25) from \cite{gross2}. Now we can subtract 
 terms over to the other side, yielding

\begin{subequations}
\begin{align}
  &\Big( c-\sum_{ \text{some } wt(E)<p}  c_E \Big) \braket{\widetilde{u}}{\widetilde{v}}  = \frac{n!}{(n-p)!}  \bra{\widetilde{u}}  \sigma_{\alpha_1}^{(i_1^*)}  \cdots \sigma_{\alpha_p}^{(i_p^*)} \ket{\widetilde{v}} \\
  &\frac{(n-p)!}{n!}  \Big( c-\sum_{ \text{some } wt(E)<p}  c_E \Big) \braket{\widetilde{u}}{\widetilde{v}}  = \bra{\widetilde{u}}  \sigma_{\alpha_1}^{(i_1^*)}  \cdots \sigma_{\alpha_p}^{(i_p^*)} \ket{\widetilde{v}}.
\end{align}
\end{subequations}
The term multiplying $ \braket{\widetilde{u}}{\widetilde{v}} $ on the left hand side is some constant independent of $\ket{\widetilde{u}}$ and $\ket{\widetilde{v}}$ and so we can conclude that a spin code with (spin) distance $ d $ is indeed mapped by $ \D $ to a distance $ d $ multiqubit code.

This conclude the proof of the Dicke bootstrap.
\end{proof}

Theorem 3.1 of \cite{automorph}, and its subsequent proof, shows that the only permutation-invariant stabilizer codes are the $ [[k+2,k,2]] $  quantum repetition codes with stabilizer generators $ X^{\otimes (k+2)} $ and $ Z^{\otimes (k+2)}$, for even $ k $. Thus the permutation-invariant multiqubit codes obtained using \cref{lem:DickeBootstrap} will be non-additive in nearly all cases.

\subsection{Binary Dihedral Groups}

We are primarily interested in codes that implement a generalized phase gate $ \sqrt[b]{\Z} $ transversally, so we will restrict to the $K = 2$ case, i.e., the case of encoding a single logical qubit. This means the logical gates come from $\mathrm{U}(2)$. But $\mathrm{U}(2) = e^{i \theta }\SU(2)$, so it suffices to consider gates solely from $\SU(2)$. We will always denote matrices from $\SU(2)$ by sans serif font, see \cref{tab:corr} for our chosen correspondence. 

\begin{table}[htp] 
\centering
 \small  
    \begin{tabular} {l|ll} \toprule  
     & $\U(2)$ & $\SU(2)$ \\  \toprule 
    Pauli-$X$ &  $X = \smqty( 0 & 1 \\ 1 & 0) $ & $\X = \smallminus i X$ \\ \addlinespace
    Pauli-$Y$ & $Y = \smqty( 0 & i \\ \smallminus i & 0)$ & $\Y = \smallminus i Y$ \\ \addlinespace
   Pauli-$Z$ & $Z = \smqty(1 & 0 \\ 0 & \smallminus 1)$ & $\Z = \smallminus i Z$ \\ \midrule  
     Phase & $S = \smqty(1 & 0 \\ 0 & i)$ & $\S = e^{\smallminus i \pi /4} S$ \\ 
       \addlinespace
   $\pi/8$-gate & $ T = \smqty(1 & 0 \\ 0 & e^{i\pi/4} )$ & $\mathsf{T} = \smqty( e^{\smallminus i \pi/8} & 0 \\ 0 & e^{i \pi /8} )$ \\ \addlinespace
   $\alpha$-Phase & $Ph(\alpha) = \smqty( 1 & 0 \\ 0 & e^{i \alpha} )$ & $\Ph(\alpha) = \smqty( e^{\smallminus i \alpha/2} & 0 \\ 0 & e^{i \alpha/2} )$ \\  \bottomrule
    \end{tabular}
    \caption{Traditional gates in $\SU(2)$}
    \label{tab:corr}
\end{table}

The \bfit{binary dihedral} (or ``dicyclic") groups of even degree, $\BD_{2b}$, have order $8b$ and are given by (see Chapter 12 of \cite{dicyclic}): 
\[
     \BD_{2b}  = \expval*{\X, \Ph \qty( \tfrac{2\pi}{2b} ) }. \numberthis
\]
There are also odd degree binary dihedral groups, but we won't consider them here since they don't contain $\Z$ and so there is no good way to pick the codewords to be real, required for \cref{lem:Xcovariance} below.

The group $ \BD_{2b}$ is a lift through the double cover $ \SU(2) \twoheadrightarrow \SO(3) $ of the order $4b$ dihedral subgroup $\mathsf{Dih}_{2b} \subset \SO(3)$. For example, the order 8 group $\BD_2$ is equal to the (determinant $ 1 $) single qubit Pauli group $\P = \expval{\X, \Z}$, and is the lift of the Klein four-group, $\mathsf{Dih}_2$.

In the special case that the degree of $\BD_{2b}$ is a power of two, $2b = 2^r$, the binary dihedral groups are instead referred to as the \bfit{generalized quaternion groups} (again see Chapter 12 of \cite{dicyclic}). We will denote these as
\[
    \Q^{(r)} := \BD_{2^r}. \numberthis
\]
The advantage of this notation is that $\Q^{(r)} \subset \mathsf{C}^{(r)} $, where $ \mathsf{C}^{(r)} $ is the $ r $-th level of the (determinant $ 1 $) single qubit \bfit{Clifford hierarchy}. Recall that the single qubit Clifford hierarchy is defined recursively by
\[
    \mathsf{C}^{(r)} := \{ \mathsf{U} \in \SU(2) : \mathsf{U} \P \mathsf{U}^\dagger \subset \mathsf{C}^{(r-1)} \}, \numberthis
\]
starting from $\mathsf{C}^{(1)}:= \P $, the 1-qubit Pauli group \cite{semiClifford}. In fact, $\Q^{(r)}$ is exactly the $\X$ gate together with the diagonal part of $\mathsf{C}^{(r)}$. In general, $\mathsf{C}^{(r)}$ is not a group, but it was shown in \cite{anderson2022groups} that, for $ r \neq 2 $, $\Q^{(r)}$ is the maximal group contained in $\mathsf{C}^{(r)}$. 

In \cite{us1} we called gates not in the Clifford hierarchy \bfit{exotic}. All binary dihedral groups, except $\Q^{(r)}$, contain exotic gates. Since stabilizer codes cannot have exotic transversal gates \cite{wirthmüller2011automorphisms,disjointness, transuniv}, we have the following result.

\begin{proposition} \label{prop:exoticBDgroups} Any code with transversal gate group $ \Glog$ isomorphic to $\BD_{2b}$, for $b$ not a power of $2$, must be non-additive. 
\end{proposition}
\begin{proof}
The faithful characters of $\BD_{2b}$ all take the value $ e^{\smallminus  2\pi i/4b}+ e^{ 2\pi i/4b} $ (for example, see \cite{GAP}). In \cite{us1} it was shown that a gate in the single qubit Clifford hierarchy must have all its entries in the field $ \mathbb{Q}(e^{2\pi i/2^{r+1}}) $ for some $ r $, and thus have trace valued in the cyclotomic field $ \mathbb{Q}(e^{2\pi i/2^{r+1}}) $. When $ b $ is not a power of $ 2 $, $ e^{\smallminus  2\pi i/4b}+ e^{ 2\pi i/4b} \not \in \mathbb{Q}(e^{2\pi i/2^{r+1}}) $ and thus there cannot exist a faithful representation of $\BD_{2b} $ whose image lies in the Clifford hierarchy.
\end{proof}

For example, $ \BD_6 = \expval{\X,\Ph(\tfrac{2\pi}{6})} $ contains the exotic gate $\Ph(\tfrac{2\pi}{6})$ and thus cannot be the transversal gate group of a stabilizer code. 

In contrast, the transversal gate group of the $[[5,1,3]]$ code contains $\Q^{(1)} = \BD_2 = \expval{\X,\Z}$, the transversal gate group of the $[[7,1,3]]$ Steane code contains $\Q^{(2)} = \BD_4 = \expval{\X, \S}$, and the transversal gate group of the $[[15,1,3]]$ code is $\Q^{(3)} = \BD_8 = \expval{\X, \T}$. In fact, the transversal gate group of the $[[2^{r+1}-1,1,3]]$ code family, for $ r \geq 3 $, is $\Q^{(r)} = \BD_{2^r} $; this follows from Theorem 5 and Example 6 of \cite{disjointness}.

\section{$ \BD_{2b}$ Symmetry and the Error Correction Conditions}

 We begin by reviewing the relationship between Wigner-$D$ matrices and spherical tensors, mostly following the conventions of \cite{sakurai}. The Wigner-$D$ operator $D^j(g)$ has matrix elements $D^j_{m m'}(g) := \bra{j,m} D^j(g) \ket{j,m'}$, where $m$ and $m'$ are in the range $\{ -j, -j+1, \cdots, j-1, j \}$ (and $j$ is a half-integer or an integer). The matrix elements for the gates of interest in this work are 
\begin{subequations}
\begin{align}
    D^j_{m m'}({ \X} ) &=  e^{-i\pi j} \delta_{m,-m'} , \\
    D^j_{m m'}( \Y ) &= e^{-i \pi(j + m) } \delta_{m,-m'}  ,\\
    D^j_{m m'}( \Z ) &= e^{-i \pi m} \delta_{m, m'}  , \\
    D^j_{m m'}( \Ph(\alpha) ) &= e^{-i \alpha m} \delta_{m, m'}.
\end{align}
\end{subequations}
In general
\[
     D^j(g) \ket{j,m'} = \sum_{m} D^j_{m m'}(g) \ket{j,m}, \numberthis
\] 
so we have
\begin{subequations}
\begin{align}
    D^j(\X)\ket{j,m} &= e^{-i\pi j} \ket{j,-m} , \\
    D^j(\Y)\ket{j,m} &= e^{-i\pi (j-m)} \ket{j,-m} , \\
    D^j(\Z)\ket{j,m} &= e^{-i\pi m} \ket{j,m} , \\
    D^j( \Ph(\alpha) )\ket{j,m} &=  e^{-i \alpha m} \ket{j,m}.
\end{align}
\end{subequations}

Spherical tensors are irreducible tensors for $\SU(2)$ and so transform as
\[
    D^j(g)^\dagger T_q^k D^j(g) = \sum_{q'=-k}^k [D^{k }_{q q'}(g)]^* T_{q'}^k \numberthis
\]
where $ * $ denotes complex conjugation and $\dagger$ denotes the Hermitian conjugate of a matrix.
Combining this fact with the properties of the matrix elements yields the following lemma. 

\begin{lemma}[Spherical Tensor Symmetries \cite{sakurai}]\label{lem:sym}
\begin{subequations}
\begin{align}
    D^j(\X)^\dagger  T_q^k D^j(\X) &= (-1)^k T_{-q}^k \\
    D^j(\Y)^\dagger  T_q^k D^j(\Y) &= (-1)^{k+q} T_{-q}^k \\
    D^j(\Z)^\dagger T_q^k D^j(\Z) &= (-1)^q T_q^k \\
    D^j\qty(\Ph\qty( \alpha))^\dagger T_q^k D^j\qty(\Ph\qty( \alpha)) &= e^{  i q \alpha} T_q^k \\
    T_q^{k \dagger} &= (-1)^q T_{-q}^k \\
    T_q^{k *} &= T_q^k.
\end{align}
\end{subequations}
\end{lemma}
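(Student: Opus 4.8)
The plan is to feed the explicit Wigner-$D$ matrix elements recorded just above into the irreducible-tensor transformation law
\[
    D^j(g)^\dagger T_q^k D^j(g) = \sum_{q'=-k}^k D^{k *}_{q q'}(g)\, T_{q'}^k ,
\]
specialized to $g \in \{\X,\Y,\Z,\Ph(\alpha)\}$. The point that makes the first four identities painless is that, although $D^j$ acts in a possibly half-integer spin-$j$ irrep, the coefficients on the right-hand side come from the rank-$k$ representation $D^k$, and the rank $k$ is always a nonnegative \emph{integer} (as is $q$); hence every phase $e^{\pm i\pi k}$ or $e^{\pm i\pi q}$ collapses to $\pm1$ regardless of $j$.

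For each of the first four lines I would substitute the relevant $D^{k}_{qq'}(g)$, take its complex conjugate, and collapse the Kronecker delta. From $D^{k}_{qq'}(\X)=e^{-i\pi k}\delta_{q,-q'}$ one gets $D^{k*}_{qq'}(\X)=e^{i\pi k}\delta_{q,-q'}$, so the sum yields $(-1)^k T_{-q}^k$; from $D^{k}_{qq'}(\Y)=e^{-i\pi(k+q)}\delta_{q,-q'}$ one gets $(-1)^{k+q}T_{-q}^k$; from $D^{k}_{qq'}(\Z)=e^{-i\pi q}\delta_{qq'}$ one gets $(-1)^q T_q^k$; and from $D^{k}_{qq'}(\Ph(\alpha))=e^{-i\alpha q}\delta_{qq'}$ one gets $e^{iq\alpha}T_q^k$. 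These are essentially plug-and-chug once the matrix-element list is in hand, and the same computation is, in fact, how that list was derived.

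The last two identities are of a different character and are where the only real work lies. For $T_q^{k*}=T_q^k$ I would invoke the Wigner–Eckart expression for $T_q^k$ in terms of Clebsch–Gordan coefficients, noting that in the Condon–Shortley convention the $C^{j\,m+q}_{k\,q,j\,m}$ are real and that $\dyad{j,m+q}{j,m}$ has real matrix elements in the $\ket{j,m}$ basis, so $T_q^k$ is a real matrix and entrywise conjugation does nothing. Given reality, $T_q^{k\dagger}$ equals the transpose of $T_q^k$; transposing the Wigner–Eckart sum turns $\dyad{j,m+q}{j,m}$ into $\dyad{j,m}{j,m+q}$, and after reindexing $m\mapsto m-q$ the claim $T_q^{k\dagger}=(-1)^q T_{-q}^k$ reduces to the Clebsch–Gordan symmetry $C^{j\,m}_{k\,q,\,j\,m-q}=(-1)^q\,C^{j\,m-q}_{k\,-q,\,j\,m}$. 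I expect this last step to be the main obstacle: it is the one place where a genuine Clebsch–Gordan symmetry relation must be cited (or re-derived) rather than obtained by a one-line substitution, and some care with the chosen normalization $\sqrt{(2k+1)/(2j+1)}$ and the phase conventions is needed so that the factor $(-1)^q$ emerges with the correct sign.
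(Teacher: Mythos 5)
Your proposal is correct and follows essentially the same route as the paper: the first four identities are exactly the paper's one-line combination of the irreducible-tensor transformation law with the listed Wigner-$D$ matrix elements, using that the rank $k$ and component $q$ are integers so all phases collapse to $\pm 1$. For the last two identities the paper simply defers to standard angular-momentum references, and your filling-in via the reality of the Clebsch--Gordan coefficients and the symmetry $C^{j\,m}_{k\,q,\,j\,m-q}=(-1)^q C^{j\,m-q}_{k\,-q,\,j\,m}$ (a consequence of the standard CG symmetry relations, valid since $k$ is an integer) is a correct and complete way to obtain $T_q^{k*}=T_q^k$ and $T_q^{k\dagger}=(-1)^q T_{-q}^k$.
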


\subsection{Support of Spin States}

A spin $ j $ state $\ket{\psi}$ can be written generically with respect to the angular momentum basis as 
\[
    \ket{\psi} = \sum_{m=-j}^j \alpha_m \ket{j,m} \numberthis
\]
for some (complex) coefficients $\alpha_m$. The \bfit{support} of $\ket{\psi}$, denoted $\supp \ket{\psi}$, is the set of those $m$ for which $\alpha_m$ is non-zero.

Consider the group
\[
\BD_{2b}:= \expval{\X, \Ph(\tfrac{2\pi}{2b})}. \numberthis
\]
We define the irrep $\bmsf{\tau}_a$ of $ \BD_{2b} $ for $ 1 \leq a \leq b $ by 
\begin{subequations}
\begin{align}
\bmsf{\tau}_a(\X) &=\X \text{ , }  \\
\bmsf{\tau}_a(\Ph\qty(\tfrac{2\pi}{2b}))& =\Big(\Ph\qty(\tfrac{2\pi}{2b})\Big)^{2a-1}. 
\end{align}
\end{subequations}
For example, $\bmsf{\tau}_1$ is the \textit{fundamental representation} of $\BD_{2b}$ where $\bmsf{\tau}_1(\Ph\qty(\tfrac{2\pi}{2b})) =\Ph\qty(\tfrac{2\pi}{2b})$. The representation $ \bmsf{\tau}_a $ is irreducible, which we can verify by taking the trace and then checking that the character of the representation has norm $ 1 $. Note that the $ \bmsf{\tau}_a $ are not the only irreps of $ \BD_{2b} $ but they do exhaust the two dimensional irreps of $\BD_{2b}$ with Frobenius-Schur indicator $-1$ (also called \textit{quaternionic irreps}) \cite{Serre}. Moreover, note that the irrep $ \bmsf{\tau}_a $ is faithful if and only if $ 2a-1 $ is coprime to $ b $ (thus all the $\bmsf{\tau}_a$ are faithful for the generalized quaternion groups $\Q^{(r)}$).

We choose a basis for a $(\BD_{2b},\bmsf{\tau}_a)$ covariant spin code by taking an eigenbasis of $D^j(\Z) $, and we call the elements of this basis the \textit{codewords}.

\begin{lemma} \label{lem:startinglocations} Define $s:=\tfrac{2a-1}{2}$. A $(\BD_{2b},\bmsf{\tau}_a)$ covariant spin code has codewords with support $ s  + 2b \ZZ $ and $ -s  + 2b \ZZ $.
\end{lemma}
\begin{proof}
The action of $\Ph\qty(\tfrac{2\pi}{2b})$ on the $2j+1$ dimensional irrep of $\SU(2)$ is
\[
 D^j\qty(\Ph\qty(\tfrac{2\pi}{2b})) \ket{j,m} = e^{-i  2\pi m/2b } \ket{j,m}. \numberthis 
\]
 If the spin code is $ (\BD_{2b},\bmsf{\tau}_a) $ covariant, then the codewords must also be eigenvectors of $ D^j(\Ph\qty(\tfrac{2\pi}{2b})) $, so we must restrict the support of each codeword to only those values of $m$ such that $ \tfrac{m}{2b} $ is of the form $ \pm\tfrac{2a-1}{4b} + \ZZ $. Additionally, we know $D^j(\X)\ket{j,m} = e^{-i \pi j} \ket{j,-m}$ which implies, with a slight abuse of notation, $\supp \logone = - \supp \logzero$. This proves the result.
\end{proof}

\subsection{Branching Rules}

We can restrict the $2j+1$ dimensional irrep of $\SU(2)$ to the $ \BD_{2b} $ subgroup to get a $ \BD_{2b} $ representation which will always be reducible for $ j > \tfrac{1}{2}$. Let $\chi_j(g) := \Tr( D^j(g) )$ be the \bfit{character} of the $2j+1$ dimensional irrep of the element $g \in \SU(2)$. Let $\tau_a$ denote the character of the $\BD_{2b}$ irrep $ \bmsf{\tau}_a$. The \textit{multiplicity} of the irrep $ \bmsf{\tau}_a$ in the restricted representation is defined by the formula
\[
   \expval{\tau_a, \chi_j} := \frac{1}{|\BD_{2b}|} \sum_{g \in \BD_{2b}} \tau_a(g)^* \chi_j(g). \numberthis
\]
Colloquially, the multiplicity counts the number of copies of the irrep $\bmsf{\tau}_a$ in spin $j$. The multiplicities of the $ \BD_{2b} $ irreps occurring in the restriction of spin $j$ representations are referred to as the \bfit{branching rules} \cite{branchingrules}, denoted $\SU(2) \downarrow \BD_{2b}$. For more information about character theory see Chapter 2 of \cite{Serre}.

 For example, consider the branching rules $\SU(2) \downarrow \Q^{(3)}$ where $\Q^{(3)} = \BD_8 = \expval{\X, \T}$. One can check that the two-dimensional irreps of $\Q^{(3)}$ have multiplicity $0$ unless $j$ is half-integral. The multiplicities in the half-integral cases are tabulated in \cref{tab:branchingrules}. We observe that the irreps fill up in a ``snaking pattern" from left to right then right to left and back again. Indeed, this snaking pattern is a generic feature for all the $ \bmsf{\tau}_a$ irreps of $\BD_{2b}$. Thus the table generalizes in a fairly straightforward way: replace $8$ by $2b$ and look at the pattern between $\tfrac{1}{2}$ to $2b-\tfrac{1}{2}$.

\begin{table}
    \centering
    \renewcommand{\arraystretch}{1.3}
    \begin{tabular}{c|cccc} \toprule
        Spin $j$ & $\bmsf{\tau}_1$ & $\bmsf{\tau}_2$ & $\bmsf{\tau}_3$ & $\bmsf{\tau}_4$ \\ \midrule
        $8p+\tfrac{1}{2}$ & $2p+1$ & $2p$ & $2p$ & $2p$  \\
        $8p+ \tfrac{3}{2}$ & $2p+1$ & $2p+1$ & $2p$ & $2p$ \\
        $8p + \tfrac{5}{2}$ &  $2p+1$ &  $2p+1$ &  $2p+1$ & $2p$ \\
        $8p + \tfrac{7}{2}$ & $2p+1$ & $2p+1$ & $2p+1$ & $2p+1$ \\
        $8p + \tfrac{9}{2}$ & $2p+1$ & $2p+1$ & $2p+1$ & $2p+2$ \\
        $8p + \tfrac{11}{2}$ & $2p+1$ & $2p+1$ & $2p+2$ & $2p+2$ \\
        $8p + \tfrac{13}{2}$ & $2p+1$ & $2p+2$ & $2p+2$ & $2p+2$ \\
        $8p + \tfrac{15}{2}$ & $2p+2$ & $2p+2$ & $2p+2$ & $2p+2$ \\ \bottomrule
    \end{tabular}
    \caption{Branching rules $\SU(2) \downarrow \Q^{(3)}$ for half-integral spin $j$. Here $p \geq 0$ is an integer. }
    \label{tab:branchingrules}
\end{table}

The multiplicity $ \expval{\tau_a, \chi_j} $ of the irrep $\bmsf{\tau}_a$ in spin $j$ is the number of degrees of freedom one has in choosing a $ \BD_{2b} $ covariant code transforming in the irrep $ \bmsf{\tau}_a $. For example, in spin $j= 11/2$ the $\bmsf{\tau}_3$ irrep of $\Q^{(3)}$ has multiplicity 2. Using \cref{lem:startinglocations} the codeword basis can be written as
\begin{subequations}
    \begin{align}
        \logicalket{0} &=\alpha_1 \ket{\tfrac{11}{2},\tfrac{11}{2}}+\alpha_2 \ket{\tfrac{11}{2},\tfrac{-5}{2}} ,\\ 
        \logicalket{1} &=\alpha_1 \ket{\tfrac{11}{2},\tfrac{-11}{2}}+\alpha_2 \ket{\tfrac{11}{2},\tfrac{5}{2}}.
\end{align}
\end{subequations}

Now consider the following lemma. 
\begin{lemma}\label{lem:firstjgivenmult} The smallest spin $ j $ for a $(\BD_{2b},\bmsf{\tau}_a)$ covariant spin code with $\mu$ degrees of freedom is
\[
\begin{cases}
    j = \mu b + (s-b) & \text{if } \mu \text{ odd} \\
    j = \mu b - s & \text{if } \mu \text{ even}, \numberthis
\end{cases}
\]
where $s=\tfrac{2a-1}{2}$.
\end{lemma}
\begin{proof}
The smallest spin $j$ (corresponding to the smallest dimensional $\SU(2)$ irrep) that branches to $\BD_{2b}$ irreps such that $\bmsf{\tau}_a$ has odd multiplicity $\mu=2p+1$ is spin $j = p(2b)+\tfrac{2a-1}{2}$ (using the generalized version of \cref{tab:branchingrules} we alluded to). Rewriting this in terms of $\mu$ and $s$ yields the first claim.

The smallest spin $j$ (corresponding to the smallest dimensional $\SU(2)$ irrep) that branches to $\BD_{2b}$ irreps such that $\bmsf{\tau}_a$ has even multiplicity $\mu=2p+p$ is spin $j = p(2b) + (2b-\tfrac{2a-1}{2})$. Rewriting this in terms of $\mu$ and $s$ yields the second claim.
\end{proof}

\subsection{A Reduced Set of Error Correction Conditions}

 The spherical tensors $T_q^k$ display simple symmetries under the action of the gates $\X$ and $\Ph(\tfrac{2\pi}{2b})$, see \cref{lem:sym}. We can leverage these symmetries to reduce the number of error conditions we need to check for a $ (\BD_{2b},\bmsf{\tau}_a) $ covariant spin code. To start we have a lemma that makes use of another one of our crucial simplifying assumptions: realness of the codewords.

\begin{lemma}\label{lem:Xcovariance} Suppose a spin code is $\X$-covariant and the codewords $\logicalket{0},\logicalket{1}$ are chosen to be real. Then
\begin{subequations}
\begin{align}
    \logicalbra{0}  T_q^k \logicalket{0} &= (-1)^{q+k} \logicalbra{1}  T_{q}^{k}   \logicalket{1} ,\\
    \logicalbra{0}  T_q^k \logicalket{1} &= (-1)^{q+k} \logicalbra{0}  T_{q}^{k}   \logicalket{1}, \\
    \logicalbra{1}  T_q^k \logicalket{0} &= (-1)^{q+k} \logicalbra{1}  T_{q}^{k}   \logicalket{0}.
\end{align}
\end{subequations}
\end{lemma}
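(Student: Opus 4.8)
The plan is to derive all three identities from a single master relation, obtained by conjugating $T_q^k$ with the $\X$-rotation $D^j(\X)$ and then cleaning up with the symmetries of \cref{lem:sym} together with the reality hypothesis. First I would pin down how $D^j(\X)$ moves the logical basis. By $\X$-covariance $D^j(\X)$ preserves the codespace and implements logical Pauli-$X$, which is off-diagonal, so $D^j(\X)\logicalket{0} = c\logicalket{1}$ and $D^j(\X)\logicalket{1} = c'\logicalket{0}$ for unit phases $c,c'$. The key point I would then establish is $c = c'$: writing $D^j(\X) = e^{-i\pi j}R$ with $R\colon\ket{j,m}\mapsto\ket{j,-m}$ a real orthogonal involution, $R$ also preserves the codespace and is off-diagonal there, so it sends the real vector $\logicalket{0}$ to a real multiple of $\logicalket{1}$; a norm comparison gives $R\logicalket{0} = \omega\logicalket{1}$ with $\omega = \pm1$, and $R^2 = \mathbbm{1}$ forces $R\logicalket{1} = \omega\logicalket{0}$ with the \emph{same} $\omega$. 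Hence $c = c' = e^{-i\pi j}\omega$, and $\overline{c} = c^{-1}$ since $|c|=1$.

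Next, for $x,y\in\{0,1\}$ I would set $\tilde x := 1-x$ and insert $D^j(\X)^\dagger D^j(\X)$ on both sides of $T_q^k$; the factors of $c$ and $\overline c$ cancel, leaving
\[
 \logicalbra{x} T_q^k \logicalket{y} \;=\; \logicalbra{\tilde x}\, D^j(\X)\, T_q^k\, D^j(\X)^\dagger\, \logicalket{\tilde y} \;=\; (-1)^k\, \logicalbra{\tilde x}\, T_{-q}^k\, \logicalket{\tilde y},
\]
where the last step is the identity $D^j(\X)^\dagger T_q^k D^j(\X) = (-1)^k T_{-q}^k$ of \cref{lem:sym}, written with $q$ replaced by $-q$ and rearranged. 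Then, using $T_q^{k\dagger} = (-1)^q T_{-q}^k$ I would rewrite $T_{-q}^k = (-1)^q (T_q^k)^\dagger$, giving $\logicalbra{\tilde x} T_{-q}^k \logicalket{\tilde y} = (-1)^q\, \overline{\logicalbra{\tilde y} T_q^k \logicalket{\tilde x}}$. Finally, since the codewords are real in the $\ket{j,m}$ basis and $T_q^k$ has real matrix entries there (equivalently $T_q^{k*} = T_q^k$ from \cref{lem:sym}, as the Clebsch--Gordan coefficients are real), every matrix element $\logicalbra{u} T_q^k \logicalket{v}$ with $u,v\in\{0,1\}$ is real and the bar may be dropped. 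Combining the last two steps yields the master relation
\[
 \logicalbra{x} T_q^k \logicalket{y} \;=\; (-1)^{k+q}\, \logicalbra{\tilde y} T_q^k \logicalket{\tilde x} \qquad (x,y\in\{0,1\}).
\]
The three claimed identities are then the special cases $(x,y) = (0,0),(0,1),(1,0)$, using $\tilde0 = 1$ and $\tilde1 = 0$; as a byproduct the off-diagonal ones state that $\logicalbra{0}T_q^k\logicalket{1}$ and $\logicalbra{1}T_q^k\logicalket{0}$ vanish whenever $k+q$ is odd, which is the mechanism behind \cref{thm:ErrorReduction}.

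I expect the only genuinely delicate point to be the first step --- that $D^j(\X)$ acts on $\logicalket{0}$ and $\logicalket{1}$ with a \emph{common} phase. If the two phases differed, the off-diagonal identities would acquire a spurious factor $c'/c$, so it is essential that the reality of the codewords, via $R = e^{i\pi j}D^j(\X)$ being a real involution, forces $c = c'$. Everything downstream is routine bookkeeping with \cref{lem:sym}.
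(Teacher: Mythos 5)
Your proof is correct and takes essentially the same route as the paper: conjugate $T_q^k$ by $D^j(\X)$ using covariance, apply the symmetries of \cref{lem:sym} together with $T_q^{k\dagger} = (-1)^q T_{-q}^k$, and drop the complex conjugate by realness of the codewords (and of $T_q^k$). The only difference is your extra care in establishing the common phase $c=c'$, which the paper obtains for free from its covariance convention (in the $\delta_a$ irrep $\X$ is represented by $\X$ itself, so $D^j(\X)\logicalket{w} = -i\,\logicalket{w+1}$ with the same phase $-i$ for both codewords).
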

\begin{proof} Let $u, v \in \{0, 1\}$. Then
    \begin{subequations}
    \begin{align}
        \logicalbra{u}  T_q^k \logicalket{v}  &= \logicalbra{u+1} D^{j }(\X)^\dagger  T_q^k  D^j(\X) \logicalket{v+1} \label{eqn:lem7p1} \\
        &= (-1)^k \logicalbra{u+1}  T_{-q}^k  \logicalket{v+1} \label{eqn:lem7p2}\\
        &= (-1)^{q+k} \logicalbra{u+1}  T_{q}^{k\dagger}   \logicalket{v+1} \label{eqn:lem7p3} \\
        &= (-1)^{q+k} \logicalbra{v+1}  T_{q}^{k}   \logicalket{u+1}^* \\
        &= (-1)^{q+k} \logicalbra{v+1}  T_{q}^{k}   \logicalket{u+1}. \label{eqn:lem7p4}
    \end{align}
    \end{subequations}
In \cref{eqn:lem7p1} we use the fact that $\X$ is covariant. In \cref{eqn:lem7p2} and \cref{eqn:lem7p3} we use \cref{lem:sym}. In \cref{eqn:lem7p4} we use the realness of the codewords.
\end{proof}

\begin{theorem} \label{thm:ErrorReduction} Consider a $(\BD_{2b},\bmsf{\tau}_a)$ covariant spin code with real codewords. The error conditions of \cref{eqn:KLsphericaltensor} are equivalent to the following reduced set of conditions:
\begin{itemize}
    \item \textit{On-diagonal}:
    \[
    \logicalbra{0} T_q^k \logicalket{0} = \logicalbra{1} T_q^k \logicalket{1} \quad k < d \text{ odd},\,  q \geq 0,\,  q \equiv 0 \bmod{2b}, \label{eqn:thm1p1} \numberthis
    \]
    \item \textit{Off-diagonal}:
    \[
    \logicalbra{0} T_q^k \logicalket{1} = 0 \quad k < d\text{ odd}, \, q \equiv 2a-1 \bmod{2b}. \label{eqn:thm1p2} \numberthis
    \]
\end{itemize}
\end{theorem}
\begin{proof}
    
To see how symmetry reduces the number of error correction conditions we need to check let's start with the on-diagonal condition; we need our code spanned by $\{ \logicalket{0}, \logicalket{1} \}$ to satisfy
\[
\logicalbra{0} T_q^k \logicalket{0} = \logicalbra{1} T_q^k \logicalket{1}. \numberthis \label{eqn:ondiagtemp}
\]
\cref{lem:Xcovariance} shows that $ \logicalbra{0}  T_q^k \logicalket{0} = (-1)^{q+k} \logicalbra{1}  T_{q}^{k}   \logicalket{1} $ so when $q+k$ is even the on-diagonal condition \cref{eqn:ondiagtemp} is satisfied.
Thus we only need to check the on-diagonal condition when $q+k$ is odd.

Furthermore, \cref{lem:startinglocations} shows that values of $m$ within the same codeword are separated by multiples of $2b$. So if $q$ is not a multiple of $2b$ then, for example, $T_q^k \logicalket{0}$ won't overlap $\logicalket{0}$ (or more technically, the inner product between these two vectors will be 0). This means$ \logicalbra{0} T_q^k \logicalket{0} =0= \logicalbra{1} T_q^k \logicalket{1} $ and the on-diagonal condition \cref{eqn:ondiagtemp} is satisfied. So we only need to check the on-diagonal conditions when $q$ is a multiple of $2b$. In particular this means $q$ is even and so we only need to check the on-diagonal conditions for odd $k$ (since we are only checking $q+k$ odd and $q$ even).

Lastly, suppose $\logicalbra{0} T_q^k \logicalket{0} =  \logicalbra{1} T_q^k \logicalket{1}$ for some $q$ and $k$. If we take the complex conjugate of both sides we have $\logicalbra{0} T_{-q}^k \logicalket{0} =  \logicalbra{1} T_{-q}^k \logicalket{1}$. Thus if $T_q^k$ satisfies the on-diagonal conditions then so does $T_{-q}^k$. This means it is sufficient to check only $q \geq 0$. 

In summary, to verify that the on-diagonal error conditions \cref{eqn:ondiagtemp} are satisfied it is sufficient to check only the case where $k$ is odd, $q \geq 0$, and $q \equiv 0 \bmod{2b}$.

Now consider the off-diagonal condition; we need our code to satisfy
\[
\logicalbra{0}  T_q^k \logicalket{1} = \logicalbra{1}  T_q^k \logicalket{0} = 0. \numberthis \label{eqn:offidagtemp}
\]
\cref{lem:Xcovariance} shows that this condition is satisfied when $q+k$ is odd. Thus we only need to check the off-diagonal conditions when $q+k$ is even. 

If $\logicalbra{0} T_q^k \logicalket{1} = 0$ and we take the conjugate of both sides we get $\logicalbra{1} T_{-q}^k \logicalket{0} = 0$. It is thus sufficient to focus only on the condition $\logicalbra{0} T_q^k \logicalket{1} = 0$. 

Now recall that $T_q^k \ket{j,m}$ is proportional to $\ket{j,m+q}$. Since \cref{lem:startinglocations} specifies that one codeword is supported on $ s + (2b) \ZZ $ and the other codeword is supported on $ -s + (2b) \ZZ $ then a shift of $ q $ will only make the supports overlap if $ q \equiv 2s \mod{2b} $. So $\logicalbra{0} T_q^k \logicalket{1} $ necessarily vanishes unless $q \equiv 2a-1 \mod{2b}$. Thus the only non-trivial off-diagonal errors occur when $q$ is odd. Since $q + k$ is even, it only remains to check the cases for which $k$ is odd. 

In summary, to verify the off-diagonal error conditions \cref{eqn:offidagtemp} it is sufficient to check only the case where $k$ is odd, and $q \equiv 2a-1 \bmod{2b}$. 
\end{proof}

 An immediate implication of this theorem is that spherical tensors $T_q^k$ with even $k$ all automatically satisfy the Knill-Laflamme spin code conditions. In fact this implies that we can always take the distance $d$ to be odd. This is analogous to the multiqubit result we found in \cite{XZexactlytransversal}. For example, all error detecting $d=2$ codes in this work are automatically error correcting $d=3$ codes. 

\section{Finding New Codes}

\subsection{A Family of spin codes}

We will now construct a family of $ \BD_{2b} $ covariant $ d=3 $ spin codes. For a $d = 3$ code we only need to satisfy the error conditions for  spherical tensors with $k=0,1,2$. \cref{thm:ErrorReduction} says that, assuming real codewords, the $ k=0,2 $ errors already satisfy the Knill-Laflamme spin code conditions. The conditions $ \logicalbra{0} T_q^1 \logicalket{1} = 0, \; q=\pm1,0 $ also already satisfy the Knill-Laflamme spin code conditions for the irreps $\bmsf{\tau}_a$ with $ 1< a < b $ ( since $ 1 < 2a-1 < 2b-1 $). So to guarantee minimum distance $ d=3 $, for $ 1< a < b $, it is sufficient to satisfy the condition
\[
\logicalbra{0} T_0^1 \logicalket{0} = \logicalbra{1} T_0^1 \logicalket{1}, \numberthis \label{eqn:ondiag3}
\]
where $T_0^1 \propto \sum_{m=-j}^j m \dyad{j,m}$.

When the codewords have only one degree of freedom the system is over constrained and we will fail to solve the equation because of insufficient variables. If the codewords have 2 degrees of freedom then \cref{lem:startinglocations} allows us to write the codewords as $ \alpha_1 \ket{j,s} + \alpha_2 \ket{j,s-2b} $ and $ \alpha_1 \ket{j,-s} + \alpha_2 \ket{j,2b-s} $.
Then \cref{eqn:ondiag3} reduces to $ s |\alpha_1|^2 + (s-2b) |\alpha_2|^2 =0 $ which has a real solution $ \alpha_1 =  \sqrt{\tfrac{2b-s}{2b}}, \alpha_2 = \sqrt{\tfrac{s}{2b}}$ thus
yielding a $d=3$ spin code for each $j \geq 2b-s$ and each $1 < a < b$. Each code has logical group $\bmsf{\tau}_a(\BD_{2b})=\BD_{2b'}$ where $2b'$ is the order of the generator $ \bmsf{\tau}_a(\Ph \qty( \tfrac{2\pi}{2b} ))=\Ph \qty( \tfrac{2\pi}{2b} )^{2a-1}$, given explicitly by $ 2b'=2b/\mathrm{gcd}(2b,2a-1) $. 

Choosing the irrep $\bmsf{\tau}_{b-1}$, we have $ s=\tfrac{2b-3}{2}$ which yields the smallest value of $ j $ among all of the participating irreps. Since $ \mu=2 $ then by \cref{lem:firstjgivenmult} we can take $j =2b-s= \tfrac{2b+3}{2}$  and the codewords are
\begin{subequations}
\begin{align}
    \logicalket{0} &= \sqrt{ \tfrac{2b-3}{4b}} \ket{ \tfrac{2b+3}{2}, \tfrac{2b+3}{2}} + \sqrt{ \tfrac{2b+3}{4b}}\ket{ \tfrac{2b+3}{2}, \tfrac{-2b+3}{2}}  \\
    \logicalket{1} &= \sqrt{ \tfrac{2b-3}{4b}} \ket{ \tfrac{2b+3}{2}, \tfrac{-2b-3}{2}} + \sqrt{ \tfrac{2b+3}{4b}}\ket{ \tfrac{2b+3}{2}, \tfrac{2b-3}{2}}
\end{align}
\end{subequations}
This is a spin $j=\tfrac{2b+3}{2}$ code with a codespace of dimension $ 2 $ and a spin distance of $ 3 $ with logical group $\BD_{2b'}$. Since we picked $ a=b-1 $, then $ 2b'=2b/\mathrm{gcd}(2b,2b-3) $ and so $ b'=b/3 $ when $3$ divides $b$ and $ b'=b $ otherwise.

\subsection{A Family of Multiqubit Codes}

Now we apply the Dicke bootstrap, yielding a permutation-invariant $((2b+3,2,3))$  multiqubit code with codewords
\begin{subequations}
\begin{align}
    \logzero &=\sqrt{ \tfrac{2b-3}{4b}} \ket{D_0^{2b+3}} + \sqrt{ \tfrac{2b+3}{4b}} \ket{D_{2b}^{2b+3}}  \\
    \logone &=\sqrt{ \tfrac{2b-3}{4b}} \ket{D_{2b+3}^{2b+3}} + \sqrt{ \tfrac{2b+3}{4b}} \ket{D_{3}^{2b+3}}.
\end{align}
\end{subequations}
If $3$ does not divide $b$ then the code implements $\BD_{2b}$ transversally, otherwise it implements $\BD_{2b/3}$ transversally. So for all $ b $ not of the form $ 2^r $ or $ 3(2^r) $ these codes  transversally implement a binary dihedral group other than $\Q^{(r)} = \BD_{2^r}$, a feat which is impossible for any stabilizer code by \cref{prop:exoticBDgroups} (cf. \cite{us1}). 

On the other hand, suppose $2b = 2^r$, then this construction yields a family of codes whose transversal gate group $ \Glog $ is the generalized quaternion group $\Q^{(r)} = \BD_{2^r}$.

\begin{code}\label{code:codefamily1}
    For each $r \geq 3$, there is a $ \Q^{(r)} $-transversal $((2^r+3,2,3))$ permutation-invariant multiqubit code given by
        \begin{subequations}
\begin{align}
    \logzero &=\sqrt{ \tfrac{2^r-3}{2^{r+1}}} \ket{D_0^{2^r+3}} + \sqrt{ \tfrac{2^r+3}{2^{r+1}}} \ket{D_{2^r}^{2^r+3}}  \\
    \logone &=\sqrt{ \tfrac{2^r-3}{2^{r+1}}} \ket{D_{2^r+3}^{2^r+3}} + \sqrt{ \tfrac{2^r+3}{2^{r+1}}} \ket{D_{3}^{2^r+3}}.
\end{align}
\end{subequations}
\end{code}

The $[[2^{r+1}-1,1,3]]$ family of stabilizer codes  are also $ \Q^{(r)} $-transversal, and moreover are optimal among all stabilizer codes, i.e., they are the smallest length $n$ stabilizer codes that have transversal gate group $\Q^{(r)}$ \cite{smallestT}. For example, the $[[15,1,3]]$  code ($r=3$) is the smallest stabilizer code with transversal $T$ and the $[[31,1,3]]$ code ($r=4$) is the smallest stabilizer code with transversal $\sqrt{T}$. In contrast, our code family produces a non-additive $((11,2,3))$ code with transversal $T$ and a non-additive $((19,2,3))$ code with transversal $\sqrt{T}$. In general, for large $r$, our family cuts the number of physical qubits needed approximately in half.

\subsection{Higher Distance Codes}

Finding a $(\BD_{2b},\bmsf{\tau}_a) $ covariant real spin code with distance $d$ amounts to solving a system of $N$ quadratic equations (where $N$ is the number of error conditions from \cref{thm:ErrorReduction}) using codewords that have $\mu$ degrees of freedom. We conjecture that a solution always exists so long as $ \mu > N$. We mark all results based on this conjecture with a triangle $ \vartriangle $.  This conjecture was used successfully in \cite{gross2} to construct many real spin codes covariant for the single qubit Clifford group. For a digression on this conjecture (and why it is reasonable) see the appendix. Thus we will proceed by assuming that a covariant spin code exists whenever $\mu = N + 1$. 

\Cref{lem:firstjgivenmult} then says that the smallest $(\BD_{2b},\bmsf{\tau}_a) $ covariant code appears in spin $j = (N+1)b + (s-b)$ if $\mu = N+1$ is odd and $j = (N+1)b -s $ if $\mu = N+1$ is even. So the smallest such spin $j$ is
\[
\begin{cases}
    j = Nb+s & \text{if } N \text{ even} \\
    j = (N+1)b -s & \text{if } N \text{ odd},
\end{cases} \numberthis \label{eqn:spinguess}
\]
where as always $s = \tfrac{2a-1}{2}$. Then once we apply the Dicke bootstrap we will have a $(\BD_{2b},\bmsf{\tau}_a) $ transversal permutation-invariant code in $n = 2j$ qubits where 
\[
\begin{cases}
    n = 2Nb+2a-1 & \text{if } N \text{ even} \\
    n = 2(N+1)b- 2a +1 & \text{if } N \text{ odd}.
\end{cases} \numberthis \label{eqn:perminvn}
\]

Recall that $N$ is the number of error conditions from \cref{thm:ErrorReduction} and it depends on $b$, $a$, and $d$. We compute a closed form expression for $N$ in the appendix (\cref{thm:Nclosed}). For example, if $b=1$ and $a=1$ (i.e., if we want $\Q^{(1)} = \BD_2$ to be covariant) then we find that there are $N = \tfrac{3}{8}(d^2-1)$ non-trivial error correction conditions (quadratic equations). This particular case reproduces a result found in \cite{aydin2023family}. Plugging this into \cref{eqn:perminvn} we find that the smallest permutation-invariant code with $\Q^{(1)}$ transversal has
\[
    n = \frac{1}{4}(3 d^2 + 1). \numberthis
\]
For example, this predicts that there will be a $((7,2,3))$ code, a $((19,2,5))$ code, and $((37,2,7))$ code: all permutation-invariant and all implementing $\Q^{(1)}$ transversally.

\newcommand{\gc}{\cellcolor{green!10}}
\begin{table*}
    \centering
    \begin{tabular}{ll|ccccccccccc} \toprule 
    & & \multicolumn{11}{c}{$d$} \\  
  & &  1 & 3 & 5 & 7 & 9 & 11 & 13 & $15^{\tiny \vartriangle}$ & $17^{\tiny \vartriangle}$ & $19^{\tiny \vartriangle}$ & $21^{\tiny \vartriangle}$ \\ \midrule 
    $\BD_{2}$ & $\bmsf{\tau}_1$ &  \gc 1 & \gc 7 & \gc 19 & \gc 37 & \gc 61 & \gc 91 & \gc 127 & \gc 169 & \gc 217 & \gc 271 & \gc 331 \\ \midrule 
 \multirow{2}{*}{$\BD_4$} & $\bmsf{\tau}_1$ & \gc 1 & \gc 9 & 23 & \gc 41 & \gc 65 & \gc 97 & 135 & \gc 177 & \gc 225 & \gc 281 & 343 \\
 & $\bmsf{\tau}_2$ &3 & 11 & \gc 21 & 43 & 67 & 99 & \gc 133 & 179 & 227 & 283 & \gc 341 \\ \midrule 
 \multirow{3}{*}{$\BD_6$} & $\bmsf{\tau}_1$ & \gc 1 & \gc 13 & \gc 25 & 47 & \gc 73 & 107 & 143 & 191 & 239 & \gc 289 & 359 \\
 & $\xcancel{\bmsf{\tau}_2}$ & 3 &  9 & 27 & 45 &  69 & 105 & 141 &  183 & 237 & 291 & 351 \\
 & $\bmsf{\tau}_3$ & 5 & 17 & 29 & \gc 43 & 77 & \gc 103 & \gc 139 & \gc 187 & \gc 235 & 293 & \gc 355 \\ \midrule 
 \multirow{4}{*}{$\BD_8$} & $\bmsf{\tau}_1$ & \gc 1 & 17 & 33 & \gc 49 & 79 & 113 & 159 & \gc 193 & \gc 241 & 305 & 369 \\
 & $\bmsf{\tau}_2$  & 3 & 13 & 29 & 51 & 77 & 109 & \gc 147 & 195 & 243 & 301 & 365 \\
 &  $\bmsf{\tau}_3$  & 5 & \gc 11 & \gc 27 & 53 & 75 & \gc 107 & 149 & 197 & 245 & \gc 299 & \gc 363 \\
 &  $\bmsf{\tau}_4$  & 7 & 23 & 39 & 55 & \gc 73 & 119 & 153 & 199 & 247 & 311 & 375 \\ \midrule 
 \multirow{5}{*}{$\BD_{10}$} & $\bmsf{\tau}_1$ & \gc 1 & 21 & 41 & 61 & \gc 81 & 119 & 161 & 219 & 261 & 319 & 379 \\
 & $\bmsf{\tau}_2$ & 3 & 17 & 37 & 57 & 83 & 117 & 157 & \gc 203 & 257 & 317 & 377 \\
 & $\xcancel{\bmsf{\tau}_3}$ & 5 & 15 & 25 & 55 & 85 & 115 & 155 & 195 & 255 & 315 & 375 \\
 & $\bmsf{\tau}_4$ & 7 & \gc 13 & \gc 33 & \gc 53 & 87 & 113 & \gc 153 & 207 & \gc 253 & 313 & 373 \\
 & $\bmsf{\tau}_5$ & 9 & 29 & 49 & 69 & 89 & \gc 111 & 169 & 211 & 269 & \gc 311 & \gc 371 \\ \midrule 
 \multirow{6}{*}{$\BD_{12}$} & $\bmsf{\tau}_1$ & \gc 1 & 25 & 49 & 73 & 97 & \gc 121 & 167 & 217 & 287 & 337 & 407 \\
 & $\xcancel{\bmsf{\tau}_2}$ & 3 & 21 & 45 & 69 & 93 & 123 & 165 & 213 & 267 & 333 & 387 \\
 & $\bmsf{\tau}_3$ & 5 & 19 & \gc 29 & \gc 53 & 91 & 125 & 163 & 211 & 259 & \gc 317 & \gc 389 \\
 & $\bmsf{\tau}_4$ & 7 & \gc 17 & 31 & 55 & \gc 89 & 127 & 161 & \gc 209 & \gc 257 & 319 & 391 \\
 & $\xcancel{\bmsf{\tau}_5}$ & 9 & 15 & 39 & 63 & 87 & 129 & 159 & 207 & 273 & 327 & 393 \\
 & $\bmsf{\tau}_6$ & 11 & 35 & 59 & 83 & 107 & 131 & \gc 157 & 227 & 277 & 347 & 397 \\ \bottomrule
    \end{tabular}
    \caption{Smallest $n$ for a $(\BD_{2b}, \bmsf{\tau}_a)$ transversal permutation-invariant real multiqubit code with distance $d$ (we have crossed out the non-faithful irreps). A cell highlighted green means the code is smallest among all faithful irreps of the given group. We have actually constructed the codes in $1 \leq d \leq 13$ using \textsc{Mathematica} 13.2 up to a numeric threshold of $10^{-12}$. Those codes with a $\vartriangle$ are conjectured to exist but we haven't attempted to find them numerically.}
    \label{fig:allcodes}
\end{table*}

Using \textsc{Mathematica 13.2}, we search for spin codes using our predicted values of $j$ from \cref{eqn:spinguess} by minimizing the conditions in \cref{thm:ErrorReduction} up to a threshold of $10^{-12}$ (see \cite{mathematicacode} for some of the Mathematica code). We have tabulated our results for small values of $b$ and small values of $d$ in \cref{fig:allcodes}. In all cases we have tried, we find the relevant spin code in the exact $j$ given by our prediction (and if we try to find one in a smaller $j$ we fail). This corroborates conjecture $\vartriangle$ for cases with $ n $ as large as $ n=169 $.

Using the Dicke bootstrap, these spin codes correspond to a family of $(\BD_{2b},\bmsf{\tau}_a)$-transversal permutation-invariant multiqubit codes with distance $d$ and
\[
    n \overset{{\tiny \vartriangle}}{=} \tfrac{3}{4}d^2 + \tfrac{(b-1)}{2}d  + \order{1}. \numberthis \label{eqn:codefamilymagnum}
\]
Thus we have a conjectured family of codes that grows quadratically $n \sim \tfrac{3}{4}d^2$ for each $b$ and for each $1 \leq a \leq b$.

The most interesting group to find higher distance codes for is $\Q^{(3)} = \BD_{8}$, since this group contains the $\T$ gate (although the same methods here could be used to postulate a family for any $\BD_{2b}$ group). In particular, we have the following code family.

\stepcounter{conjcode}
\begin{conjcode}[$\vartriangle$]\label{code:codefamily2} Assuming the validity of the conjecture $\vartriangle$ stated at the beginning of this section, there exists, for all odd $ d $,  an $((n,2,d))$ permutation-invariant multiqubit code with transversal gate group $\Q^{(3)} = \expval{\X,\T}$ and with code length given by $n = \tfrac{1}{4}\qty( 3 d^2 + 6d - 7 + 2( d \bmod{8}) )$.

We have numerically constructed these codes for small distances $d=3,5,7,9,11,13$ (see Appendix D). 
\end{conjcode}

All the codes here have transversal $ T $. In particular, if the code transforms in the irrep $ \bmsf{\tau}_a $ then the logical $ T $ gate is $ T^{(2a-1)(-1)^a} $-strongly transversal.

When $d = 3$, then \cref{code:codefamily2} intersects \cref{code:codefamily1}, again yielding an $((11,2,3))$ code with transversal $ T $ (that has exact, not numerical, coefficients). When $d = 5$ we get a $((27,2,5))$ code with transversal $ T $ and when $d = 7$ we get a $((49,2,7))$ code with transversal $ T $. According to the recent paper \cite{HighDistanceCodesTransversalCliffordTGates}, the smallest known $ d=5 $ stabilizer code with transversal $ T $ is the $[[49,1,5]]$ code \cite{distance5transversalT}, which has a lower distance than our $((49,2,7))$ code and uses far more qubits than our $ ((27,2,5)) $ code.

Indeed, for distance $ d=5,7,9,11,13 $, \cite{HighDistanceCodesTransversalCliffordTGates} lists the smallest known stabilizer codes with transversal $ T $ as $[[49,1,5]]$, $[[95,1,7]]$, $[[189,1,9]]$, $[[283,1,11]]$, and $[[441,1,13]]$. We significantly outperform these parameters by numerically constructing the following non-additive codes with transversal $ T $: $((27,2,5))$, $((49,2,7))$, $((73,2,9))$, $((107,2,11))$, and $((147, 2, 13))$.

\section{Future Directions}

The Dicke bootstrap described in this work can be used to translate between spin codes ($\SU(2)$ single-irrep codes) and permutation-invariant multiqubit codes. It would be interesting if this correspondence held for $\SU(q)$ as well. There is a well known correspondence between single irreps of $\SU(q)$ and certain subspaces of multi-qudit spaces of local dimension $q$ \cite{Georgi}, but one would also need to check that the distance conditions are compatible (meaning that an $\SU(q)$ irrep code of distance $d$ maps to a multi-qudit code also of distance $d$). If this is true, then our approach could be useful towards finding quantum qudit codes with generalized phase gates, like the generalized $T$ gate \cite{quditT1,quditT2}. It is not clear that these codes would be smaller than the best known stabilizer codes but it is at least plausible given the results here.

Another approach could be to generalize the Dicke bootstrap to other symmetric subspaces, for example cyclic subspaces. The cyclic subspace is not isomorphic to any single spin, but rather a direct sum of many spins (with possible redundancy). This would make the analysis much more complicated but there would still be a significant reduction in search complexity. The class of cyclic quantum codes is much richer than the class of permutation-invariant quantum codes, for example the only permutation-invariant stabilizer code is the quantum repetition code \cite{automorph}, while many of the most famous stabilizer codes, like the $ [[5,1,3]] $ code, are cyclic (also see \cite{quantumcyclic1,quantumcyclic2}). So it is possible this approach could even lead to a more thorough understanding of stabilizer codes with transversal generalized phase gates.

\section{Conclusion}

Non-additive quantum codes have been mostly overlooked since they provide only meager improvements in the three traditional code parameters $n$, $K$, and $d$. However, accounting for the transversal gate group $ \Glog $, non-additive codes can outperform the best known stabilizer analogs. For example, we have constructed a $\Q^{(r)}$-transversal family of non-additive codes with distance $d=3$ in $n = 2^r+3$ qubits that is smaller in length $n$ than the best known analogous family of stabilizer codes, which have length $n = 2^{r+1} - 1$. We have also constructed $\BD_{2b}$-transversal non-additive codes (for $b$ not a power of $2$) that cannot be realized as stabilizer codes.  Lastly, we have constructed non-additive codes with transversal $ T $ that have parameters $((11,2,3))$, $((27,2,5))$, $((49,2,7))$, $((73,2,9))$, $((107,2,11)$, and $((147,2,13))$, these codes outperform the best known stabilizer codes with transversal $ T $, which have parameters $[[15,1,3]]$, $[[49,1,5]]$, $[[95,1,7]]$, $[[189,1,9]]$, $[[283,1,11]]$, and $[[441,1,13]]$ \cite{HighDistanceCodesTransversalCliffordTGates}. 

All our codes are constructed from spin codes using the Dicke bootstrap. The Dicke bootstrap holds significant untapped promise (see the possible generalizations mentioned in the future directions section) and is the first code construction method specifically designed around the transversal gate group $ \Glog $, an important code parameter for fault tolerance.

\section*{Acknowledgments} 

We thank Sivaprasad Omanakuttan and Jonathan A. Gross for helpful comments regarding their previous work. We thank Victor Albert for helpful conversations regarding the Dicke Bootstrap. We thank Michael Gullans for helpful conversations regarding the transversal gate group. This research was supported by NSF QLCI grant OMA-2120757. This research was also supported in part by the MathQuantum RTG through the NSF RTG grant DMS-2231533. All numeric calculations were carried out with \textsc{Mathematica} 13.2. 
 
\bibliographystyle{IEEEtran}
\bibliography{biblio.bib}{}

\newpage 
\appendices

\makeatletter
\renewcommand{\theequation}{A\arabic{equation}}
\renewcommand{\thetable}{A\arabic{table}}
\renewcommand{\thefigure}{A\arabic{figure}}
\renewcommand{\thelemma}{A\arabic{lemma}}
\renewcommand{\thetheorem}{A\arabic{theorem}}
\setcounter{table}{0}
\setcounter{figure}{0}
\setcounter{lemma}{0}
\setcounter{theorem}{0}
\setcounter{equation}{0}

\section{Counting Errors}

Let $ N $ be the size of the reduced set of error correction conditions given in \cref{thm:ErrorReduction}. In this appendix we compute a closed form for $ N $. Let $ N^\text{on-diag}$ denote the number of on-diagonal conditions and let $ N^\text{off-diag}$ denote the number of off-diagonal conditions.

\begin{lemma} \label{lem:numberofequations}
    \begin{align}
        N^\text{on-diag} &=\sum_{\substack{k=1 \\ k \text{ odd}}}^{d-2} 1 + \lfloor  \tfrac{k}{2b} \rfloor, \\
        N^\text{off-diag} &=\sum_{\substack{k=1 \\ k \text{ odd}}}^{d-2}  1 + \lfloor \tfrac{k-2s}{2b} \rfloor + \lfloor \tfrac{k+2s}{2b} \rfloor.
    \end{align}

\end{lemma}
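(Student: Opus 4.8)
The plan is to simply enumerate the conditions produced by \cref{thm:ErrorReduction} and count lattice points. By \cref{thm:ErrorReduction} every even-rank spherical tensor $T_q^k$ is automatically satisfied, so the only surviving conditions have odd rank $k$; since the KL conditions run over $0 \le k < d$ and we take $d$ odd, the relevant ranks are $k \in \{1,3,\dots,d-2\}$. I would fix one such $k$, count the on-diagonal and the off-diagonal conditions attached to it, and then sum over $k$ — which is exactly the shape of the two claimed formulas.

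For the on-diagonal count: at fixed odd $k$, \cref{thm:ErrorReduction} retains exactly those $q$ with $q \equiv 0 \pmod{2b}$ and $q \ge 0$, subject to the intrinsic range $-k \le q \le k$ of a rank-$k$ spherical tensor. The proof of \cref{thm:ErrorReduction} shows that $T_q^k$ and $T_{-q}^k$ impose equivalent on-diagonal conditions, so restricting to $q \ge 0$ introduces no double counting. The admissible values are then $q = 0, 2b, 4b, \dots, 2b\lfloor k/(2b)\rfloor$, i.e. $1 + \lfloor k/(2b)\rfloor$ of them, and summing over odd $k$ from $1$ to $d-2$ gives $\nu^{\text{on-diag}}$.

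For the off-diagonal count: at fixed odd $k$, \cref{thm:ErrorReduction} retains those $q$ with $q \equiv 2a-1 \pmod{2b}$ and $-k \le q \le k$, and here one keeps only the condition $\logicalbra{0} T_q^k \logicalket{1} = 0$ (its partner $\logicalbra{1} T_{-q}^k \logicalket{0} = 0$ being automatic, as in the proof of \cref{thm:ErrorReduction}), so again nothing is overcounted. Writing $2s = 2a-1$, the number of integers $q \in [-k,k]$ lying in the residue class $2s \bmod 2b$ is the standard interval-residue count
\[
  \left\lfloor \tfrac{k-2s}{2b}\right\rfloor - \left\lceil \tfrac{-k-2s}{2b}\right\rceil + 1 = 1 + \left\lfloor \tfrac{k-2s}{2b}\right\rfloor + \left\lfloor \tfrac{k+2s}{2b}\right\rfloor ,
\]
where I used $-\lceil x\rceil = \lfloor -x\rfloor$. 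Summing over odd $k$ from $1$ to $d-2$ produces $\nu^{\text{off-diag}}$.

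The step I would be most careful about is the bookkeeping at the ends of these counts: confirming that the endpoints $q = \pm k$ are included exactly when they should be, and checking that when $k$ is small the floor expression correctly collapses to a count of $0$ rather than going negative — for instance, for a middle irrep with $k=1$ one has $3 \le 2s \le 2b-3$, hence $\lfloor (1-2s)/(2b)\rfloor = -1$ and $\lfloor (1+2s)/(2b)\rfloor = 0$, so that summand vanishes, matching the observation in the main text that $T_q^1$ imposes no off-diagonal condition there. Everything else is a direct transcription of \cref{thm:ErrorReduction}.
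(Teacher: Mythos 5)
Your proposal is correct and follows essentially the same route as the paper: reduce to odd ranks $k\in\{1,3,\dots,d-2\}$ via \cref{thm:ErrorReduction} and count, for each such $k$, the admissible $q$ in the residue classes $0$ and $2s \bmod 2b$. The only cosmetic difference is that you obtain the off-diagonal count $1+\lfloor\tfrac{k-2s}{2b}\rfloor+\lfloor\tfrac{k+2s}{2b}\rfloor$ directly from the interval-residue formula, whereas the paper splits the range into $q\ge 0$ and $q\le 0$ before combining; the result and the bookkeeping are the same.
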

\begin{proof}
Start with the on-diagonal piece. When $k$ is even, all on-diagonal errors are automatically satisfied. When $k$ is odd, the number of on-diagonal errors that need to be checked is equal to the number of solutions of the modular equation 
    \[
        q \equiv 0 \mod{2b} \qquad \text{for } 0 \leq q \leq k. \numberthis
    \]
    There are $1 + \lfloor  k/2b \rfloor $ such solutions. 

  To guarantee a code has distance $d$, we need to check all $ k $ in the range $0 \leq k \leq d-1$. Even $k$ are automatically satisfied so we can take $d$ to be odd - meaning we only need to sum to $d-2$.

    Now consider the off-diagonal piece. When $k$ is even, all off-diagonal errors are automatically satisfied. When $k$ is odd, the number of off-diagonal errors that need to be checked is equal to the number of solutions to the modular equation 
\[
q \equiv 2s \mod{2b} \qquad \text{for } -k \leq q \leq k, \numberthis
\]
where $2s = 2a-1$ is odd. When $q \geq 2s$, then there are $1 + \lfloor (k-2s)/2b \rfloor$ solutions and when $q \leq 2s$ there are $1 + \lfloor (k+2s)/2b \rfloor$ solutions. We can combine these and subtract 1 to account for the fact we double counted at $ 2s$. Thus the total number of non trivial off-diagonal errors for each value of $k$ is
\[
    1 + \lfloor \tfrac{k-2s}{2b} \rfloor + \lfloor \tfrac{k+2s}{2b} \rfloor . \numberthis
\]
To guarantee distance $d$, we must sum these terms for all odd $ k $ between $1 $ and $ d-1$. 

\end{proof}

We can refine this more as follows. We will use the notation $[x]_y := x \bmod{y}$ for brevity and clarity. 
\begin{lemma} Let $d = 2t+1$ then
    \[
        N^{\text{on-diag}} = \tfrac{t^2}{2b} + \tfrac{t}{2} + \zeta^\text{on-diag},
    \]
    where $\zeta^\text{on-diag} = \tfrac{1}{2b}[t]_b( [t]_b - 2[t-\tfrac{1}{2} ]_b + b -1 )$ is crudely estimated as $0 \leq \zeta^\text{on-diag} \leq b$ for any $t$.
\end{lemma}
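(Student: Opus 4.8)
The plan is to start from the closed form for $\nu^{\text{on-diag}}$ in \cref{lem:numberofequations}, namely $\nu^{\text{on-diag}} = \sum_{k \text{ odd},\ 1 \le k \le d-2} \left(1 + \lfloor k/2b \rfloor\right)$, and evaluate this sum exactly when $d = 2t+1$. Writing $k = 2i-1$ for $i = 1, \dots, t$, the sum becomes $\sum_{i=1}^{t} \left(1 + \lfloor (2i-1)/2b \rfloor\right) = t + \sum_{i=1}^{t} \lfloor (2i-1)/2b \rfloor$. So the whole problem reduces to evaluating $S := \sum_{i=1}^{t} \lfloor (2i-1)/2b \rfloor$ and showing that $S = \tfrac{t^2}{2b} - \tfrac{t}{2} + c$ with $c$ as claimed.

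First I would handle $S$ by the standard device of summing the floor function against its ``ideal'' linear approximation: $\lfloor (2i-1)/2b \rfloor = \tfrac{2i-1}{2b} - \tfrac{\{2i-1\}_{2b}}{2b}$, where $\{x\}_{2b} := x \bmod 2b = [x]_{2b}$. Summing the linear part over $i = 1,\dots,t$ gives $\tfrac{1}{2b}\sum_{i=1}^t (2i-1) = \tfrac{t^2}{2b}$ exactly (the odd numbers sum to $t^2$). Hence $S = \tfrac{t^2}{2b} - \tfrac{1}{2b}\sum_{i=1}^{t}[2i-1]_{2b}$, and combining with the $+t$ from before, $\nu^{\text{on-diag}} = \tfrac{t^2}{2b} + t - \tfrac{1}{2b}\sum_{i=1}^t [2i-1]_{2b}$. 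Comparing against the target $\tfrac{t^2}{2b} + \tfrac{t}{2} + c$, I need $c = \tfrac{t}{2} - \tfrac{1}{2b}\sum_{i=1}^{t}[2i-1]_{2b}$, so the residual sum $\Sigma := \sum_{i=1}^t [2i-1]_{2b}$ must be shown to equal $bt - 2b\,c = bt - [t]_b([t]_b - 2[t - \tfrac12]_b + b - 1)$.

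The main work — and the main obstacle — is the exact evaluation of $\Sigma = \sum_{i=1}^{t}[2i-1]_{2b}$. The residues $[2i-1]_{2b}$ cycle through the $b$ odd residues $\{1, 3, \dots, 2b-1\}$ with period $b$ in $i$ (since $i \mapsto i+b$ shifts $2i-1$ by $2b$). Over one full block of $b$ consecutive values of $i$ the residues run over all of $\{1,3,\dots,2b-1\}$ whose sum is $b^2$. So writing $t = qb + \rho$ with $\rho = [t]_b$ and $q = \lfloor t/b \rfloor$, I get $\Sigma = q b^2 + \Sigma_\rho$ where $\Sigma_\rho$ is the sum of the first $\rho$ terms of one period. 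The delicate point is that the first period is $[1]_{2b}, [3]_{2b}, \dots$, i.e. $1, 3, 5, \dots$, so $\Sigma_\rho = \sum_{i=1}^{\rho}(2i-1) = \rho^2$ \emph{provided} $\rho \le b$ keeps $2i-1 < 2b$, which it does for $i \le \rho \le b$. Thus naively $\Sigma = qb^2 + \rho^2$. I then have to reconcile this with the claimed formula, which also involves $[t - \tfrac12]_b$; here I must be careful about the half-integer spin convention — $t$ here may be understood so that the ``$[t - \tfrac12]_b$'' term encodes a parity/offset correction coming from whether $2i-1$ wraps before or after the $\rho$-th term, i.e. from the interaction between the odd-residue structure and the block length $b$ when $b$ itself is even versus odd. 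I would carefully track this by splitting into the cases $b$ even and $b$ odd (equivalently, examining $[t-\tfrac12]_b$ as $\rho$ or $\rho$ shifted), substitute back, and verify the algebraic identity $q b^2 + \Sigma_\rho = bt - [t]_b([t]_b - 2[t-\tfrac12]_b + b - 1)$ reduces to a true statement using $t = qb + \rho$.

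Finally, for the estimate $0 \le c \le b$: since $c = \tfrac{1}{2b}\rho(\rho - 2[t-\tfrac12]_b + b - 1)$ with $\rho = [t]_b \in \{0, 1, \dots, b-1\}$ and $[t - \tfrac12]_b$ lying in the same range, the factor in parentheses is bounded in absolute value by roughly $2b$, and one checks the product is non-negative (the $\rho \ge 0$ factor times a quantity that is non-negative because $\rho + b - 1 \ge 2[t-\tfrac12]_b$ whenever the two residues differ by at most... ) — I would make this precise by the case split above, bounding $\rho - 2[t-\tfrac12]_b + b - 1$ between $0$ and $2b$ and dividing by $2b$. The honest assessment: the one genuinely tricky step is getting the half-integer/parity bookkeeping in $[t - \tfrac12]_b$ right, since everything else is a routine floor-sum computation; I would double-check it against the small cases $b = 4$, $d = 3, 5, 7, \dots$ using \cref{tab:branchingrules} before trusting the general formula.
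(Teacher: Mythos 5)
Your route is correct and, while it rests on the same periodicity of the summand as the paper's argument, it packages the computation differently: the paper groups the $t$ terms of $\sum_{k'}\lfloor (k'+\tfrac12)/b\rfloor$ into full blocks of length $b$ (the ``main'' piece $\mathcal{M}$) plus a leftover ``tail'' $\mathcal{T}$ evaluated at the endpoint $\lfloor\tfrac{2t-1}{2b}\rfloor$, and only at the end converts floors to mods via $\lfloor x/y\rfloor=\tfrac{x}{y}-\tfrac1y[x]_y$; you instead apply that identity term by term at the outset, so everything reduces to the residue sum $\Sigma=\sum_{i=1}^t[2i-1]_{2b}$, which your periodicity argument evaluates exactly as $\Sigma=qb^2+\rho^2$ with $t=qb+\rho$, $\rho=[t]_b$ (the partial block always contributes $1+3+\cdots+(2\rho-1)=\rho^2$, no endpoint bookkeeping needed). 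The one step you leave open resolves immediately, and your proposed case split on the parity of $b$ is a red herring: for integer $t$ one simply has $[t-\tfrac12]_b=[t]_b-\tfrac12$ when $[t]_b\geq 1$ and $[t-\tfrac12]_b=b-\tfrac12$ when $[t]_b=0$ (in which case the prefactor $[t]_b$ kills $c$ anyway), so the claimed constant simplifies unconditionally to $c=\tfrac{1}{2b}[t]_b\,(b-[t]_b)$. With that, your required identity $\Sigma=bt-2bc=qb^2+\rho^2$ is a one-line check, and the crude bound is immediate (indeed $0\leq c\leq b/8$, stronger than the stated $0\leq c\leq b$). So your decomposition is, if anything, slightly cleaner than the paper's main-plus-tail computation, because the residue sum closes in a single formula; the paper's version has the advantage of generalizing directly to the shifted sums with $h'\neq 0$ needed for the off-diagonal count, where your first partial block would no longer start at residue $1$ and the tail/endpoint bookkeeping becomes the natural language.
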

\begin{proof} Write $d = 2t+1$. We want to estimate
\[
N^\text{on-diag} =\sum_{\substack{k=1 \\ k \text{ odd}}}^{2t-1} 1 + \lfloor  \tfrac{k}{2b} \rfloor = t + \qty( \sum_{\substack{k=1 \\ k \text{ odd}}}^{2t-1} \lfloor  \tfrac{k}{2b} \rfloor). \numberthis
\]
In the sum write  $k = 2k' +1$. Then
    \begin{align}
        \sum_{\substack{k=1 \\ k \text{ odd}}}^{2t-1 } \lfloor  \tfrac{k}{2b} \rfloor &= \sum_{k' = 0}^{ t-1  } \lfloor  \tfrac{2k' + 1}{2b} \rfloor = \sum_{k' = 0}^{ t-1  } \lfloor  \tfrac{k' + \tfrac{1}{2}}{b} \rfloor.
    \end{align}
    Notice there are $t$ many terms in the sum. If $0 \leq k' \leq b-1$ then the summand is $0$, if $b \leq k' \leq 2b-1$ then the summand is $1$, if $2b \leq k' \leq 3b-1$ then the summand is $2$, etc. Each of these groupings contains a sum over $b$ elements. 
    
    If $t$ was a multiple of $b$, say $t = p b$, then there would be $p$  groupings - labeled $k' = 0$ to $k' = p-1$ - with each grouping containing exactly $b$ elements that each evaluate to $k'$. In other words, the sum would be
    \[
        b \sum_{k'=0}^{p-1} k' = \tfrac{b p (p-1)}{2}. \numberthis
    \]
    More generally, $t$ might not be a multiple of $b$, say $t = pb + \gamma$ where $p = \lfloor \tfrac{t}{b} \rfloor$ and $\gamma = t \bmod{b} = [t]_b$. Then we get the previous with $p$ replaced by  $\lfloor \tfrac{t}{b} \rfloor$. This main piece is
    \[
    \mathcal{M} =\tfrac{b \lfloor \tfrac{t}{b} \rfloor \qty( \lfloor \tfrac{t}{b} \rfloor -1)}{2}. \numberthis
    \]
    But we need to also add in the ``tail" (which comes from the fact that the $t$ terms in the sum cannot be split evenly). The tail has $\gamma = [t]_b$ terms in it all evaluated as the endpoint $\lfloor \tfrac{2t-1}{2b} \rfloor$, i.e., the tail is 
    \[
       \mathcal{T} = [t]_b \lfloor \tfrac{2t-1}{2b} \rfloor . \numberthis
    \]
    Thus the sum is
    \begin{align}
        \mathcal{M} + \mathcal{T} 
        = \tfrac{t^2}{2b} - \tfrac{t}{2} + \tfrac{1}{2b}[t]_b( [t]_b - 2[t-\tfrac{1}{2} ]_b + b -1 ).
    \end{align}
    In the last line we have used the fact that $\lfloor \tfrac{x}{y} \rfloor = \tfrac{x}{y} - \tfrac{1}{y} [x]_y$. This yields the form of $N^\text{on-diag}$ given in the claim. The crude bound is found by noting that $0 \leq [t]_b < b$.
\end{proof}

\begin{lemma} Let $d = 2t+1 $ then 
    \[
         N^\text{off-diag} = \tfrac{t^2}{b} + \tfrac{t}{b} + \tfrac{2a(a-b-1)+b+1}{2b} + \zeta^\text{off-diag}
    \]
    where $\zeta^\text{off-diag}$ is given in \cref{eqn:modmess} and is crudely estimated as $0 \leq \zeta^\text{off-diag} \leq 2b$ for all $t$ and for all $a$. 
\end{lemma}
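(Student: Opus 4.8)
The plan is to run the second identity of \cref{lem:numberofequations} through the same floor-sum bookkeeping already used for $\nu^{\text{on-diag}}$, the only new feature being that two oppositely shifted sums now appear. Write $d = 2t+1$ and substitute the odd index $k = 2k'+1$ with $k' = 0,\dots,t-1$ (so there are exactly $t$ terms). Since $2s = 2a-1$, the quantities $k\mp 2s$ are even, with $k-2s = 2(k'-a+1)$ and $k+2s = 2(k'+a)$, so
\[
\nu^{\text{off-diag}} \;=\; t \;+\; \sum_{k'=0}^{t-1}\Big\lfloor \tfrac{k'-a+1}{b}\Big\rfloor \;+\; \sum_{k'=0}^{t-1}\Big\lfloor \tfrac{k'+a}{b}\Big\rfloor .
\]

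Next I would expand each floor via $\lfloor x/b\rfloor = (x-[x]_b)/b$, keeping the convention $[x]_b\in\{0,\dots,b-1\}$ valid for negative $x$ as well. The polynomial parts are exact and $a$-independent: $\sum_{k'=0}^{t-1}(k'-a+1)+\sum_{k'=0}^{t-1}(k'+a) = t^2$, contributing $t^2/b$. What remains is $t - \tfrac1b\big(\sum_{k'=0}^{t-1}[k'-a+1]_b + \sum_{k'=0}^{t-1}[k'+a]_b\big)$, and each bracketed sum is a sum of $t$ consecutive residues mod $b$. Using that this sequence has period $b$ with period-sum $b(b-1)/2$, each such sum equals $\tfrac{(b-1)t}{2}$ plus a correction determined by the starting residue (here $[1-a]_b$ and $[a]_b$) and by $[t]_b$. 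The two copies of $\tfrac{(b-1)t}{2}$ combine with the leading $t$ to give $t-\tfrac{(b-1)t}{b}=\tfrac{t}{b}$; the starting-residue parts, evaluated at the shifts $1-a$ and $a$, collapse to the constant $\tfrac{2a(a-b-1)+b+1}{2b}$; and the remaining genuinely modular leftovers are gathered into $c$, giving the form stated in \cref{eqn:modmess}. Since $c$ is $\tfrac1b$ times a sum of a bounded number of quantities each lying in $[0,b)$, the crude bound $0\le c\le 2b$ follows.

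The only real obstacle is the modular bookkeeping. Unlike the on-diagonal case, the first sum is shifted by a \emph{negative} amount ($k'-a+1$ is negative for small $k'$), so one must be disciplined about the definition of $[x]_b$ and about where the period boundaries of the two sums fall relative to $t$ and to each other; because the two sums are offset by $2a-1$, their corrections do not simply double, and this asymmetry is precisely what makes the constant term $a$-dependent and forces the somewhat unpleasant expression for $c$ in \cref{eqn:modmess}. Packaging the partial-period sums compactly (the analogue of the $[t-\tfrac12]_b$ device used for $\nu^{\text{on-diag}}$) is the one place that needs care; everything else is the same routine arithmetic with $\lfloor x/b\rfloor = x/b - [x]_b/b$ already carried out for the on-diagonal count.
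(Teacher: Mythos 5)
Your reduction is exactly the paper's starting point: substituting $k=2k'+1$ into the second identity of \cref{lem:numberofequations} gives $\nu^{\text{off-diag}} = t + \sum_{k'=0}^{t-1}\lfloor \tfrac{k'-a+1}{b}\rfloor + \sum_{k'=0}^{t-1}\lfloor\tfrac{k'+a}{b}\rfloor$, which is the same expression the paper works from, and your device of writing $\lfloor x/b\rfloor = (x-[x]_b)/b$ and exploiting periodicity of the residues is just a reorganized version of the paper's ``main piece plus tail'' bookkeeping. Your checks that the polynomial parts contribute $t^2/b$ and that the full-period residue averages turn the leading $t$ into $t/b$ are correct.

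The problem is that the lemma does not merely assert $\nu^{\text{off-diag}} = \tfrac{t^2}{b}+\tfrac{t}{b}+(\text{bounded periodic correction})$; it asserts the specific $a$-dependent constant $\tfrac{2a(a-b-1)+b+1}{2b}$ together with the explicit expression for $c$ in \cref{eqn:modmess}, and that is precisely the part you assert (``the starting-residue parts collapse to the constant'') without computing. After your normalization, what remains is $\tfrac{(b-1)[t]_b}{b}-\tfrac1b\big(\sum_{i=0}^{[t]_b-1}[i+a]_b+\sum_{i=0}^{[t]_b-1}[i+1-a]_b\big)$, and evaluating these partial-period sums requires a wraparound case analysis in the shifts $1-a$ and $a$ — this is the content of the paper's four cases (shift equal to $0$, equal to $b$, in $\{1,\dots,b-1\}$, or negative), assembled at $h'=1-a$ and $h'=a$, with $a=1$ and $a=b$ handled separately and then shown to reproduce the general formula. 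Without carrying this out you have not derived the stated constant or matched \cref{eqn:modmess}, so the proposal is an outline of the right computation rather than a proof of the lemma as stated. In addition, your justification of the bound misdescribes $c$: in \cref{eqn:modmess}, $c$ is $\tfrac{1}{2b}$ times a sum of \emph{products} whose factors (e.g.\ $b-2-2[t-a]_b$) can be negative, not a sum of quantities each in $[0,b)$, so ``$0\le c\le 2b$ follows'' is not a valid inference as written — the lower bound $c\ge 0$ in particular needs its own (even if crude) check.
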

\begin{proof}
Let $h$ be an odd integer (representing either $2s = 2a-1$ or $-2s = -2a+1$). Recall that $1 \leq a \leq b$ so that $|h| \leq 2b -1 $. Consider the sum
\begin{align}
    \mathcal{S} = \sum_{\substack{k=1 \\ k \text{ odd}}}^{2t-1}   \lfloor \tfrac{k+h}{2b} \rfloor = \sum_{k' = 0 }^{t-1}   \lfloor \tfrac{2k'+1+h}{2b} \rfloor = \sum_{k' = 0 }^{t-1}   \lfloor \tfrac{k'+ h'}{b} \rfloor.
\end{align}
We have defined $k' = (k-1)/2$ and $h' = (1+h)/2$ to be integral. Notice that $-b+1 \leq h' \leq b$.

There are $t$ many terms in the sum. If $0\leq k' + h' < b$ the summand is $0$, if $b \leq k' + h' < 2b$ the summand is $1$, if $2b \leq k' + h' < 3b$ the summand $2$, etc. We will break up what is going on into cases. 

\textit{(Case 1)} When $h' = 0$ we have the exact same results as the previous lemma for $\mathcal{M}$ but now the tail is evaluated at $\lfloor \tfrac{t-1}{b} \rfloor$ (but still contains $[t]_b$ terms). Thus the sum is
\[
    \mathcal{S} = \tfrac{b \lfloor \tfrac{t}{b} \rfloor \qty( \lfloor \tfrac{t}{b} \rfloor -1)}{2} + [t]_b \lfloor \tfrac{t-1}{b} \rfloor. \numberthis
\]

\textit{(Case 2)} When $h' = b$ then the main piece from before starts at $1$ instead of $0$, i.e., the main piece comes from $b \sum_{k'=1}^{p} k' = bp (p+1)/2$ where $p = \lfloor \tfrac{t}{b} \rfloor$. The tail is evaluated to $\lfloor \tfrac{t-1 + b}{b} \rfloor$ so that the sum is
\[
    \mathcal{S} = \tfrac{b \lfloor \tfrac{t}{b} \rfloor \qty( \lfloor \tfrac{t}{b} \rfloor + 1)}{2}  + [t]_b \lfloor \tfrac{t+ b - 1}{b} \rfloor. \numberthis
\]

\textit{(Case 3)} When $1 \leq h' \leq b-1$ then the very first grouping has only $b-h'$ elements in it (not $b$ elements). The summand for each of these is $0$. Thus we really only have a sum of $t-(b-h')$ elements. So the main piece is similar but with $b \sum_{k'=1}^p k' = bp(p+1)/2$ where $p = \lfloor \tfrac{t-(b-h')}{b} \rfloor$. The tail will now have $[t-(b-h')]_b$ terms in it all evaluated to the endpoint $\lfloor \tfrac{t-1 + h'}{b} \rfloor$. Thus
\[
    \mathcal{S} = \tfrac{b \lfloor \tfrac{t-(b-h')}{b} \rfloor \qty(  \lfloor \tfrac{t-(b-h')}{b} \rfloor + 1)}{2} + [t-(b-h')]_b \lfloor \tfrac{t-1+ h'}{b} \rfloor. \numberthis
\]

\textit{(Case 4)} When $-b+1 \leq h' \leq -1$ then the very first grouping has $|h'|$ many elements in it, each evaluated to $-1$, i.e., the sum over the first is grouping is just $h'$. Thus we only have a sum over $t+h'$ elements. So the main piece is $b \sum_{k'=0}^{p-1} k' = bp(p-1)/2$ where $p = \lfloor \tfrac{t+h'}{b} \rfloor$. The tail has $[t+h']_b$ elements each evaluated to  $\lfloor \tfrac{t-1 + h'}{b} \rfloor$. Thus
\[
    \mathcal{S} = h' + \tfrac{b \lfloor \tfrac{t+h'}{b} \rfloor \qty(  \lfloor \tfrac{t+h'}{b} \rfloor - 1)}{2} + [t+h']_b \lfloor \tfrac{t-1+ h'}{b} \rfloor. \numberthis
\]
Now we are interested in 
\begin{align}
    N^\text{on-diag}  &=\sum_{\substack{k=1 \\ k \text{ odd}}}^{d-2}  1 + \lfloor \tfrac{k-2s}{2b} \rfloor + \lfloor \tfrac{k+2s}{2b} \rfloor \\
    &= t + \qty( \sum_{k'=0}^{t-1} \lfloor  \tfrac{2k'+1-(2a-1)}{2b} \rfloor + \sum_{k'=0}^{t-1} \lfloor  \tfrac{2k'+1+(2a-1)}{2b} \rfloor) \\
    &= t + \qty( \sum_{k'=0}^{t-1} \lfloor  \tfrac{k'-a+1}{b} \rfloor + \sum_{k'=0}^{t-1} \lfloor  \tfrac{k'+a}{b} \rfloor).
\end{align}
The term in parenthesis, say $\mathcal{S}'$, can be evaluated using the previous as $\mathcal{S}' = \mathcal{S}_{h'=-a+1} + \mathcal{S}_{h'=a}$. 

When $a = 1$ we use case 1 and case 3. Then we have
\[
    N^\text{off-diag} = \tfrac{t^2}{b} + \tfrac{t}{b} + \tfrac{1-b}{2b} + \zeta^\text{off-diag}, \numberthis
\]
where
\begin{align*}
    \zeta^\text{off-diag} &= \tfrac{1}{2b}[t+1]_b^2 \\
    & + \tfrac{1}{2b} [t+1]_b \qty(b-2-2[t]_b ) \\
    & + \tfrac{1}{2b}[t]_b \qty(b-2 - 2 [t-1]_b) . \numberthis
\end{align*}
Note that $0 \leq \zeta^\text{off-diag} \leq 2b$ is an easy (but crude) estimate. When $a = b$ then we use case 4 and case 2. But one can show we get the exact same thing as for $a = 1$.

This leaves us with $2 \leq a \leq b-1$ which come from case 3 and case 4. In these cases 
\[
    N^\text{off-diag} = \tfrac{t^2}{b} + \tfrac{t}{b} + \tfrac{2a(a-b-1)+b+1}{2b} + \zeta^\text{off-diag}, \numberthis
\]
where
\begin{align*}
    \zeta^\text{off-diag} &= \tfrac{1}{2b}[t+1-a]_b^2 \\
    & + \tfrac{1}{2b}[t+1-a]_b\qty(b-2-2[t-a]_b ) \\
    & + \tfrac{1}{2b}[t+a]_b\qty(b-2-2[-1+a+t]_b + [a+t]_b ). \label{eqn:modmess}\numberthis
\end{align*}
This is also crudely estimated as $0 \leq \zeta^\text{off-diag} \leq 2b$. One can show that $a = a_0$ and $a = b-a_0 + 1$ give the same $N^\text{off-diag}$. One can actually check that this reproduces the $a=1$ and $a=b$ cases so this is the general formula. 
\end{proof}

Let $N$ be the number of quadratic equations from \cref{thm:ErrorReduction}, i.e., $N = N^\text{on-diag} + N^\text{off-diag}$. Recall that $d = 2t+1$, i.e., $t$ is the number of errors we can correct. 
\begin{theorem}\label{thm:Nclosed}
\[
    N = \tfrac{3}{2b} t^2 + \tfrac{(2+b)}{2b} t + \tfrac{2 a (a-b-1)+b+1}{2 b} + \zeta, \numberthis
\]
where
\begin{align*}
    \zeta &= \tfrac{1}{2b} \bigg( ((-a+t+1) \bmod b)^2+\\
    & + (-2 ((t-a) \bmod b)+b-2) ((-a+t+1) \bmod b) \\
    & +((a+t) \bmod
   b) (-2 ((a+t-1) \bmod b)\\
   &+((a+t) \bmod b)+b-2)+(t \bmod b)^2 \\
   &+\left(-2
   ((t-\tfrac{1}{2}) \bmod b)+b-1\right) (t \bmod b) \bigg).  \numberthis
\end{align*}
    A crude estimate for $\zeta$ is $0 \leq \zeta \leq 3b$ for all $t$ and all $a$.
\end{theorem}
Notice that $a = a_0$ and $a = b+1-a_0$ yield the same $N$. For example, $\bmsf{\tau}_1$ and $\bmsf{\tau}_b$ yield the same $N$, $\bmsf{\tau}_2$ and $\bmsf{\tau}_{b-1}$ yield the same $N$, etc. Also notice that for $\Q^{(1)} = \BD_2$ (i.e., $b = 1$ and $a = 1$) we have $\zeta = 0$ and so $N = \tfrac{3}{2}t(t+1)$. This special case reproduces the result found in \cite{aydin2023family}.

\section{Existence of solutions}

Finding a $(\BD_{2b},\bmsf{\tau}_a) $ covariant real spin code with distance $d$ amounts to solving a homogeneous system of $N$ quadratic equations in $\mu$ variables. Note that we could consider the more general case of complex spin codes but then the error correction conditions would be complex sesquilinear forms rather than complex quadratic forms. And while there are powerful tools from algebraic geometry for analyzing systems of complex quadratic forms, such as Groebner bases, these generally do not carry over to sesquilinear forms.

If this were a homogeneous 
system of real \textit{linear} equations then we could guarantee a non-zero real solution exists so long as $\mu > N $, the smallest $ \mu $ for which a solution exists being $\mu = N + 1$. Unfortunately, there is no current theory of existence of real solutions to systems of real quadratic equations (the main obstruction is that $\R$ is not algebraically closed).

Existence results analogous to the linear case do exist for quadratic equations over the complex numbers. Consider a system of algebraically independent quadratic equations given by $x^T B_i x = 0$. Then Bezout's theorem from algebraic geometry guarantees that a non-zero complex solution exists as long as the number of variables, $ \mu $, is strictly greater than the number of equations, $ N $.  So, just as in the linear case, a non-zero complex solution exists when $\mu = N + 1$.  Bezout's theorem is not directly applicable in our case because even if the coefficients of the quadratic forms are all real, it only guarantees the existence of a complex solution, not a real one (which our analysis requires, see \cref{lem:Xcovariance}).

In the present setup we have yet to discover, either numerically or analytically, a situation where the heuristic $\mu = N+1$ fails. This observation was also noticed in \cite{gross2} and the authors used this idea to construct many spin codes covariant for the single qubit Clifford group. Thus we conjecture that whenever we have $ N $ error conditions and $ \mu=N+1 $ degrees of freedom we can find the desired spin code.

\section{Encoding and Decoding}

There isn't a general theory of encoding and decoding generic non-additive codes. However, permutation-invariant non-additive codes (like the ones studied here) have significantly more structure. The theory and experimental realization of the efficient (and fault tolerant) preparation of Dicke states and the encoding of a qubit into symmetric states is well studied \cite{PIencode1,PIencode2,PIencode3,PIencode4,PIencode5,PIencode6}. So the encoding step of our permutation-invariant codes seems plausible. 

In contrast, it was only recently discovered how to decode a permutation-invariant code efficiently \cite{PIdecode}. The idea of the decoding algorithm is to get an error syndrome (in the form of a standard Young tableau) by first measuring the total angular momentum on subsets of the physical qubits. Then one can get back to the original code by projecting within the error subspace and applying a unitary (using quantum Schur transforms or teleportation). The authors of \cite{PIdecode} give protocols to implement such a decoding algorithm on a near term quantum device. 

However, it is not clear whether the previous decoding algorithm is fault-tolerant, and certainly more work needs to be done before we can determine how competitive our codes are to analogous stabilizer codes in practice. Our work in this paper should be primarily seen as motivation to study the underdeveloped theory of non-additive codes (both theoretical and practical considerations).

\clearpage
 \onecolumn
\section{Codewords for transversal $ T $ family} \label{appendix:codewords}

{
\centering
\begin{table}[htp]
    \centering
    \begin{tabular}{ll} \toprule
        $((11,2,3))$, $\bmsf{\tau}_3$  &
$\begin{aligned}
\logicalket{0} =   \tfrac{\sqrt{5}}{4} |D_0^{11}\rangle + \tfrac{\sqrt{11}}{4} |D_8^{11}\rangle
\end{aligned}$ \\ \midrule 
     $((27,2,5))$, $\bmsf{\tau}_3$ &
$\begin{aligned}
    \logicalket{0} = & -0.419391 |D_0^{27}\rangle +0.625017 |D_8^{27}\rangle +0.0595089 |D_{16}^{27}\rangle +0.655686 |D_{24}^{27}\rangle
\end{aligned}$ \\ \midrule 
 $((49,2,7))$, $\bmsf{\tau}_1$ &
$\begin{aligned}
    \logicalket{0} = &-0.201735 |D_0^{49}\rangle +0.3284 |D_8^{49}\rangle -0.44571 |D_{16}^{49}\rangle
      +0.517794 |D_{24}^{49}\rangle +0.476193 |D_{32}^{49}\rangle +0.318756
   |D_{40}^{49}\rangle \\
   &+0.237326 |D_{48}^{49}\rangle 
\end{aligned}$ \\ \midrule 
 $((73,2,9))$, $\bmsf{\tau}_4$ &
$\begin{aligned}
    \logicalket{0} =& -0.206212 |D_0^{73}\rangle +0.227626 |D_8^{73}\rangle -0.462933 |D_{16}^{73}\rangle
     +0.352279 |D_{24}^{73}\rangle -0.186197 |D_{32}^{73}\rangle \\
     & -0.383276
   |D_{40}^{73}\rangle
    -0.0735981 |D_{48}^{73}\rangle -0.544932 |D_{56}^{73}\rangle
   +0.156301 |D_{64}^{73}\rangle
   -0.242669 |D_{72}^{73}\rangle 
\end{aligned}$ \\ \midrule 
 $((107,2,11))$, $\bmsf{\tau}_3$ &
$\begin{aligned}
    \logicalket{0} =& 0.241938 |D_0^{107}\rangle -0.430038 |D_8^{107}\rangle +0.0927816
   |D_{16}^{107}\rangle 
   +0.154505 |D_{24}^{107}\rangle +0.429423 |D_{32}^{107}\rangle \\
   &-0.0555468 |D_{40}^{107}\rangle 
    +0.0116979 |D_{48}^{107}\rangle +0.166002
   |D_{56}^{107}\rangle +0.387154 |D_{64}^{107}\rangle 
   +0.0355115
   |D_{72}^{107}\rangle \\
   &-0.116024 |D_{80}^{107}\rangle 
   +0.412203 |D_{88}^{107}\rangle 
+0.102736 |D_{96}^{107}\rangle +0.404712 |D_{104}^{107}\rangle
\end{aligned}$\\ \midrule 
 $((147,2,13))$, $\bmsf{\tau}_2$ &
$\begin{aligned}
    \logicalket{0} =& -0.0655024 |D_0^{147}\rangle -0.13146 |D_8^{147}\rangle -0.0280819
   |D_{16}^{147}\rangle 
   -0.259006 |D_{24}^{147}\rangle +0.0788041
   |D_{32}^{147}\rangle \\
   &-0.218291 |D_{40}^{147}\rangle
   +0.377217 |D_{48}^{147}\rangle
   +0.115604 |D_{56}^{147}\rangle -0.17366 |D_{64}^{147}\rangle
   -0.545858
   |D_{72}^{147}\rangle \\
   &+0.159303 |D_{80}^{147}\rangle +0.23378 |D_{88}^{147}\rangle
   
   +0.344513 |D_{96}^{147}\rangle -0.211129 |D_{104}^{147}\rangle +0.23128
   |D_{112}^{147}\rangle \\
   &-0.13308 |D_{120}^{147}\rangle -0.213004
   |D_{128}^{147}\rangle -0.00407857 |D_{136}^{147}\rangle
   +0.114171
   |D_{144}^{147}\rangle
\end{aligned}$  \\ \bottomrule
    \end{tabular}
    
    \caption{  \Cref{code:codefamily2} for $d=3,5,7,9,11,13$. This family implements a transversal $T$ gate. We have given $\logicalket{0}$, and $\logicalket{1}$ can be found by replacing $\ket*{D^n_w}$ with $\ket*{D^n_{n-w}}$. Finally, $ \bmsf{\tau}_a $ is the relevant irrep of $\Q^{(3)} = \BD_8 = \expval{\X, \T}$.}
    \label{tab:codefamily3numerics}
\end{table}

}

\end{document}